\DeclareSymbolFont{calletters}{OMS}{cmsy}{m}{n}
\DeclareSymbolFontAlphabet{\mathcal}{calletters}
\newtheorem{Theorem}{Theorem}[part]
\newtheorem{Definition}{Definition}[part]
\newtheorem{Lemma}{Lemma}[part]
\newtheorem{Remark}{Remark}[part]
\newtheorem{Example}{Example}[part]
\newcommand{\be}{\begin{equation}}
\newcommand{\ee}{\end{equation}}
\newcommand{\bea}{\begin{eqnarray}}
\newcommand{\eea}{\end{eqnarray}}
\newcommand{\beas}{\begin{eqnarray*}}
\newcommand{\eeas}{\end{eqnarray*}}
\newcommand{\brak}[1]{\left(#1\right)}    
\newcommand{\crl}[1]{\left\{#1\right\}}   
\newcommand{\edg}[1]{\left[#1\right]}     
\newcommand{\nc}{\newcommand}
\nc{\esssup}{\mathop{\mathrm{ess\,sup}}}
\nc{\essinf}{\mathop{\mathrm{ess\,inf}}}
\nc{\argmax}{\mathop{\mathrm{arg\,max}}}
\def \P{\mathbb{P}}
\def \R{\mathbb{R}}
\def \E{\mathbb{E}}
\def \Q{\mathbb{Q}}
\def \1{\mathds{1}}
\def \Ac{{\cal A}}
\def \Cc{{\cal C}}
\def \Ec{{\cal E}}
\def \Fc{{\cal F}}
\def \Pc{{\cal P}}
\def \Nc{{\cal N}}
\def \Qc{{\cal Q}}
\def \l({{\left (}}
\def \r){{\right )}}
\def \l[{{\left [}}
\def \r]({{\right ]}}
\def\Acr{\mathscr{A}}
\def\Ccr{\mathscr{C}}
\def \dint{{\displaystyle \int}}
\DeclareMathOperator*{\argmin}{arg\,min}
\newcommand{\p}{\mathbb{P}}
\newcommand{\q}{\mathbb{Q}}
\newcommand{\MBFigure}[6]{
$\left. \right.$ \\
\refstepcounter{figure}
\addcontentsline{lof}{figure}{\numberline{\thefigure}{\ignorespaces #5}}
\begin{center}
\begin{minipage}{#1cm}
\centerline{\includegraphics[width=#2cm,angle=#3]{#4}}
\begin{center}
\upshape{F\textsc{ig} \normal
\end{center}
size{\thefigure}. $-$} #5
\end{center}
\label{#6}
\end{minipage}
\end{center}
$\left. \right.$ \\}
\let\@fnsymbol\@arabic
\title{Optimal investment and consumption under $g$- expected utility and general constraints in incomplete market }
\author{Wahid {\sc Faidi}
\thanks{Department of Mathematics, College of Science and Humanities, Shaqra University,
Al Quwayiyah, Saudi Arabia }\;
 \thanks{University of Tunis El Manar, LAMSIN, Tunis, Tunisia} 
\\ e-mail:  \textcolor[rgb]{0.00,0.07,1.00}{faidiwahid@su.edu.sa}- \textcolor[rgb]{0.00,0.07,1.00}{wahid.faidi@lamsin.rnu.tn}
}
\begin{document}
\maketitle
\begin{abstract}
This article studies the problem of utility maximization in an incomplete market under a class of nonlinear expectations and general constraints on trading strategies.
Using a $g$-martingale method, we provide an explicit solution to our optimization problem for different utility functions and characterize an optimal investment-consumption strategy through the solutions to quadratic BSDEs.

\end{abstract}
\textbf{Keywords}: nonlinear expectation - Utility maximization  -  g-martingale  -  BSDE  . 
\section{Introduction}
In financial theory, utility maximization serves as a foundational concept that guides the actions of investors in their pursuit of wealth. 
As stated by Von Neumann Morgestrern \cite{VM44}, classical (linear) expectations are the basis of traditional utility models, which assume that investors can average future gains without having to account for risk attitudes.
This representation has been challenged by later studies, namely the works of Allais \cite{Allais53}, Ellsberg \cite{Elsb61}, Marinacci \cite{Mar02} and Maccheroni et all.\cite{MMR06}. Such works has proven that in the presence of risks stemming principally from model uncertainty it is convenient to use non-linear expectations to represent investors' preferences.
In light of this obvious importance of the nonlinear representation of investors' preferences, Various alternative methods have been proposed to model the decision beyond the classical expected utility. Some examples include the concepts of capacity namely Choquet expectation introduced in  \cite{Choq53} , others use weighted expected utility, and more recently  robust utility,  recursive utility and $g$- expectation. Following this trend, we focus in this work on the portfolio optimization of an agent whose utility is represented by the solution of a nonlinear backward stochastic differential (BSDE), the $g$- expected utility. The notion of g-expectation is firsty introduced by Peng \cite{Peng97}
Since then, $g$- expectation theory has undergone considerable development. As in the case of classical expectation, a theory of $g$-martingales has developed over the past two decades. Some generalizations of the results concerning classical martingales have been made for $g$-martingales. Despite this considerable theoretical development of this notion, there are few treatises which deal with the $g$-expected utility optimal portfolio. This paper examines the optimal investment and consumption problem based on $g$- expected utility  and general stochastic constraints, not necessarily convex,  on consumption and investment strategies. We provide explicit solutions for investors with exponential, logarithmic, and power utilities in terms of solutions to BSDEs with quadratic growth generators. Our approach is based on an extension of the similar arguments of Hu et al. \cite{IH2005}, and Cheridito et al.\cite{cheridito10}. To each admissible strategy, we associate a utility process, which we show is always a $g$- supermartingale and a $g$- martingale if and only if the strategy is optimal. This method is based on Briand's results on the existence and properties of solutions to BSDEs with quadratic growth \cite{BriHu08}. We formulate constraints on consumption and investment in terms of subsets of predictable processes and use the conditional analysis results of Cheridito et al. \cite{cheridito11} to obtain the existence of optimal strategies. 
 This paper is organized as follows: The model is introduced in Section 2. The case of constant absolute risk aversion, which corresponds to exponential utility functions, is discussed in Section 3. Power and logarithmic utility are the subjects of Sections 4 and 5, respectively. Section 5 concludes with a discussion of the assumptions and potential generalizations.
\section{g-Expected Utility Maximization (g-EUM) model }\label{sec:model}
\subsection{g-expectation}
Let $T \in \mathbb{R}_+$ be a finite time horizon and $(W_t)_{0 \le t \le T}$
an $n$-dimensional Brownian motion on a probability
space $(\Omega, {\cal F}, \p)$. Denote by $({\cal F}_t)_{0 \le t \le T}$ the
augmented natural filtration of $(W_t)_{0 \le t \le T},$
$$\Fc_t = \sigma\{\sigma\{B_s; 0 \leq  s  \leq  t\} \cup \Nc \}, \Fc^0_{\infty}:=\bigcup\limits_{t>0}\Fc_t.$$
where $\Nc$ is the collection of $\P$–null sets in $\Omega$.
\\
The generator $g(t, \omega, y, z):[0, T] \times \Omega \times \mathbb{R} \times \mathbb{R}^d \longmapsto \R$ is a random function which is a progressively measurable stochastic process for any $(y, z)$. We assume that it satisfies the following  assumptions: 
\begin{itemize}
\item[$\mathbf{(A1)}$] $\forall (t,y,y',z,z')\in [0,T]\times \mathbb{R} \times \mathbb{R}\times \mathbb{R}^d \times \mathbb{R}^d;$
$$  \left|g(t, y, z)-g\left(t, y^{\prime}, z^{\prime}\right)\right| \leq \nu\left|y-y^{\prime}\right|+\phi(\left|z-z^{\prime}\right|),$$
where $\phi:\R_+\longrightarrow \R_+$, is subadditive and increasing with $\phi(0)=0$ and has a linear growth
with constant $\mu$, i.e., $\forall x\in \R_+, \phi(x)\leq \mu(x+1)$;
\item[$\mathbf{(A2)}$] $\forall y \in \mathbb{R} ;\;  g(t, y, 0)=0$,  $d\mathbb{P} \times dt$-a.e.
\end{itemize}
Note that under assumptions $\mathbf{(A1)}$ and $\mathbf{(A2)}$, we have forall $(y,z)\in \mathbb{R}\times \mathbb{R}^d$,
\begin{equation*}
\mathbb{E}\left[\left(\int_0^T|g(t, y,z)| dt\right)^2\right]=\mathbb{E}\left[\left(\int_0^T|g(t, y,z)-g(t,y,0)| dt\right)^2\right] \leq  \mathbb{E}\left[\left(\int_0^T|\phi^2(|z|) dt\right)^2\right]<+\infty,
\end{equation*}
and so, according to \cite{Jia10},  the BSDE 
\begin{equation}\label{bsde}
Y_t =\xi+\dint_t^{T}g\left(s, Y_s, Z_s\right) d s-\dint_t^{T}Z_s d B_s 
\end{equation}
admits a unique solution $(Y^{\xi},Z^{\xi}) \in \mathscr{S}^2(0, T ; \mathbb{R}) \times \mathscr{H}^2\left(0, T ; \mathbb{R}^d\right)$ for all $\xi \in  L^{2}\left(\Omega, \mathscr{F}_T, \P\right)$.
\\ The operator $\Ec_g$ defined by:
\begin{equation*}
\begin{split}
\Ec_g:L^{2}\left(\Omega, \mathscr{F}_T, \mathbb{P}\right)&\longmapsto \mathbb{R}
\\& \xi \longmapsto Y^{\xi}_0
\end{split}
\end{equation*}
 is a typical example of nonlinear expectation called $g$-expectation.
 \begin{Definition}
The system of operators
\begin{equation*}
\begin{split}
\Ec_g\left[\cdot \mid \mathcal{F}_t\right]: & L^2\left(\mathcal{F}_T\right) \rightarrow L^2\left(\mathcal{F}_t\right)\\ & \;\;\;\;\; \xi \longmapsto Y^{\xi}_t
\end{split}
\end{equation*}
 is called conditional  $g$-expectation   
 \end{Definition}
  If  $g = \phi(|z|)$ then we denote $\Ec_g$ by $\Ec^{\phi}.$ The conditional $g$-expectation is an $\mathcal{F}_t$-consistent nonlinear expectation i.e. it verifies the following properties:
 \begin{itemize}
\item[$\mathbf{(B1)}$] Monotonicity: $\mathcal{E}_g\left[Y \mid \mathcal{F}_t\right] \geq \mathcal{E}_g\left[Z \mid \mathcal{F}_t\right]$, a.s., if $Y \geq Z$, a.s.;
\item[$\mathbf{(B2)}$] Constant-preserving: $\mathcal{E}_g\left[Y \mid \mathcal{F}_t\right]=Y$, a.s., if $Y \in L^2\left(\mathcal{F}_t\right)$;
\item[$\mathbf{(B3)}$] Time consistency: $\mathcal{E}_g\left[\mathcal{E}_g\left[Y \mid \mathcal{F}_t\right] \mid \mathcal{F}_s\right]=\mathcal{E}_g\left[Y \mid \mathcal{F}_s\right]$, a.s., if $s \leq t \leq T$;
\item[$\mathbf{(B4)}$] "Zero-one law": for each $t, \mathcal{E}_g\left[1_A Y \mid \mathcal{F}_t\right]=1_A \mathcal{E}_g\left[Y \mid \mathcal{F}_t\right]$, a.s., $\forall A \in \mathcal{F}_t$.
\end{itemize}
 Reciprocally, if $\Ec$  is a conditional nonlinear expectation operator verifying $\mathbf{(B1-B4)}$ and the following domination assumption
\\ $\mathbf{(H1)}$:For each $X, Y$ in $L^2(\Fc_T)$, we have $\Ec[X|\Fc_t]- \Ec[Y|\Fc_t] \leq 
\Ec^{\phi}[X-Y|\Fc_t], \forall t\in [0,T]$, where $\phi$ is the function given in $\mathbf{(A1)}$,
then according \cite{Jia10}, there exists a function $g(t,z) : \Omega\times [0,T] \times \R^d\longrightarrow \R$  satisfying $\mathbf{(A1-A2)}$ such that $\Ec\equiv \Ec_g.$
\\ We now provide two basic instances of nonlinear expectation:
\begin{Example}{\textbf{The Standard (linear) expectation}} \\
Which correspond to linear generator 
$$g(t,y,z)=  \eta_t.z,$$ 
where $ \gamma: \Omega \times \mathbf{R}_{+} \rightarrow \mathbf{R}^n$  is bounded, adapted and continuous process. 
In this case $\Ec_g$ can be written 
$$
\Ec_g[\xi|\Fc_t]=\E_{\Q}\left(\xi  \mid \mathcal{F}_t\right)
$$
where the probability measure $\Q$ is given by
$$\dfrac{d\Q}{d\p}=\exp(\int_0^t\gamma_s dW_s -\frac{1}{2} \int_0^t\gamma^2_s ds ):=\Ec(\int_0^t\gamma_s dW_s).$$
\end{Example}
\begin{Example}{\textbf{The $\kappa$-ignorance expectation}} \\
Which correspond to  generator 
$$g(t,y,z)=  \kappa |z|,$$ where $\kappa > 0.$
In this case the functional $\Ec_g$ is represented as follows
$$\Ec_g[\xi |\Fc_t]= \essinf \limits_{Q^\theta \in \Pc^\theta}\E_{\Q}[\xi |\Fc_t],$$
$$\Pc^\theta=\{Q^\theta; \dfrac{dQ^\theta}{d\P}=\exp(-\int_0^.\theta_s dW_s -\dfrac{1}{2}\int_0^. |\theta_s|^2ds); |\theta_s| \leq \kappa\}$$
\end{Example}
\subsection{Market Model}
We examine a financial market that includes $m \le n$ stocks and a money market. At a fixed interest rate $r \ge 0$, money can be borrowed and lent from the money market, and stock prices fluctuate as follows:
$$
\frac{dS^i_t}{S^i_t} = \mu^i_t dt + \sigma^i_t dW_t, \quad S^i_0 > 0, \quad i =1 , \dots, m,
$$
Where $\mu^i_t$ and $\sigma^i_t$ are bounded predictable processes that take values in $\mathbb{R}$ and $\mathbb{R}^{1 \times n}$, respectively. Even in the absence of constraints, the market is incomplete if $m < n$, meaning that the stocks do not cover all uncertainty.

An investor with initial wealth $x \in \mathbb{R}$ can invest in the financial market and consume at intermediate times. They will receive a lump sum payment $E$ at time $T$ that is ${\cal F}_T$ measurable and income at a predictable rate $e_t$. If the investor invests using a predictable trading strategy $\pi_t$ and consumes at a predictable pace $c_t$, with values in $\mathbb{R}^{1 \times m}$, where $\pi^i_t$ is the amount invested in stock $i$ at time $t$, then his  wealth evolves as
$$
X_t = x+ \int_0^t \brak{X_s - \sum_{i=1}^m \pi^i_s} r ds +
\sum_{i=1}^m \int_0^t \frac{\pi^i_s}{S^i_s} dS^i_s + \int_0^t (e_s-c_s) ds.
$$
The matrix with rows $\sigma^i_t$, $i = 1, \dots, m$, is denoted as $\sigma_t$.
Assume that $\sigma \sigma^T$ is invertible $\nu \otimes \p$-almost everywhere, where
$\nu$ is the Lebesgue measure on $[0,T]$, and the process
$$
\theta = \sigma^T (\sigma \sigma^T)^{-1} (\mu - r1)
$$
is bounded by $\kappa$. From there, taking $p = \pi \sigma$, one can write
\be \label{XSDE}
X^{(c,p)}_t = x + \int_0^t X^{(c,p)}_s r ds +
\int_0^t p_s [dW_s + \theta_s ds] + \int_0^t (e_s-c_s) ds.
\ee
Note that if
$$
\int_0^T (|e_t| + |c_t| + |p_t|^2) dt < \infty \quad \p \mbox{-almost surely,}
$$
where $|.|$ denotes the Euclidean norm on $\mathbb{R}^{1 \times n}$, then
$$
\int_0^t p_t [dW_t + \theta_t dt] + \int_0^t (e_s-c_s) ds
$$
is a continuous stochastic process, and it follows that equation \eqref{XSDE} has a unique continuous
solution $X^{(c,p)}$ given by
\be \label{wealthprocess}
X^{(c,p)}_t = e^{rt}(x + \int_0^t e^{-rs} p_s [dW_s + \theta_s ds] + \int_0^t e^{-rs}(e_s-c_s) ds.
\ee

In the case of CRRA utility, we remark that is convenient to parametrize the investment-consumption strategies and the income process as fractions of wealth, so $p$, $c$ and $e$ are replaced by $pX^{(c,p)}$, $cX^{(c,p)}$ and $eX^{(c,p)}$ respectively. We will return to this remark in more detail in the corresponding paragraph.
Our financial agent seeks to find $c$ and $p$ (hence $\pi$) which allow them to maximize the following nonlinear expected utility:
\be \label{opt}
\Ec_g\Big[{\int_0^T \alpha e^{-\int_0^t \delta_sds} u(c_t) dt + \beta e^{-\int_0^T \delta_sds} u \brak{X^{(c,p)}_T -F}}\Big]
\ee
for given constants $\alpha , \beta > 0$, a deterministic bounded discounted factor  $ \delta$  and
a concave utility  function $u : \mathbb{R} \to \mathbb{R} \cup \crl{-\infty}$.
Along this work, we study the cases of the following classical utility functions:\\
\hspace*{5mm} $\bullet$ $u(x) = - \exp(-\gamma x)$ for $\gamma > 0$\\
\hspace*{5mm} $\bullet$ $u(x) = \log(x)$\\
\hspace*{5mm} $\bullet$ $u(x) = x^{\gamma}/\gamma$ for $\gamma \in (-\infty,0) \cup (0,1)$.\\
To solve this problem, we adopt the same method as Imkeller and Hu \cite{IH2005}, Cheridito et all \cite{cheridito10}  which consists of constructing a processes family $R^{(p,c)}$  verifying
\begin{itemize}
\item  $R_T^{(\pi,c)}=\int_0^T \alpha e^{-\int_0^t \delta_sds} u(c_t) dt + \beta e^{-\int_0^T \delta_sds} u \brak{X^{(c,p)}_T -F}$ for all $(p,c)$,
\item $R_0^{(\pi,c)}=R_0$ is constant for all  $(p,c)$,
\item  for all $(p,c)$,  $R^{(p,c)}$ is a $g$-supermartingale  and there exists a $(p^*,c^*)$ such that $R^{(p^*,c^*)}$ is a $g$-martingale.
\end{itemize}
Thus, 
$$\Ec_g(R_T^{(p,c)}) \leq R_0=\Ec_g(R_T^{(p^*,c^*)}),$$
which means that $(p^*,c^*)$ is optimal.
To formulate consumption and investment constraints we introduce non-empty subsets
$\Cc \subset {\cal P}$ and $\Qc \subset {\cal P}^{1 \times m}$, where ${\cal P}$ denotes the
set of all real-valued predictable processes $(c_t)_{0 \le t \le T}$ and
${\cal P}^{1 \times m}$ the set of all predictable processes $(\pi_t)_{0 \le t \le T}$ with values in
$\mathbb{R}^{1 \times m}$. In Section \ref{sec:exp} we do not put restrictions on
consumption and just require the investment strategy $\pi$ to belong to $Q$. In Sections \ref{sec:power} and \ref{sec:log}
consumption and investment will be of the form $c = \tilde{c} X$
and $\pi = \tilde{\pi} X$, respectively, and we will require $\tilde{c}$ to be in $C$ and
$\tilde{\pi}$ in $Q$.

Note that the expected value \eqref{opt} does not change if $(c,p)$ is
replaced by a pair $(c',p')$ which is equal $\nu \otimes \p$-a.e.
So we identify predictable processes that agree
$\nu \otimes \p$-a.e. and use the following concepts from Cheridito et al. \cite{cheridito11}:
We call a subset $A$ of ${\cal P}^{1 \times k}$ {\bf sequentially closed} if
it contains every process $a$ that is the $\nu \otimes \p$-a.e. limit
of a sequence $(a^n)_{n \ge 1}$ of processes in $A$. We call it
${\cal P}$-{\bf stable} if it contains $1_B a + 1_{B^c} a'$ for all $a,a' \in A$ and
every predictable set $B \subset [0,T] \times \Omega$. We say $A$ is
${\cal P}$-{\bf convex} if it contains $\lambda a + (1-\lambda) a'$
for all $a,a' \in A$ and every process $\lambda \in {\cal P}$ with values in $[0,1]$.
In the whole paper we work with the following\\[3mm]
\hspace*{7mm}{\bf Standing assumption} \quad $\Cc$ and $\Qc$ are sequentially closed and ${\cal P}$-stable.\\[3mm]
This will allow us to show existence of optimal strategies. If, in addition,
$C$ and $Q$ are ${\cal P}$-convex, the optimal strategies will be unique.
Note that $P = \crl{\pi \sigma : \pi \in Q}$ is a ${\cal P}$-stable subset
of ${\cal P}^{1 \times n}$, which, since we
assumed $\sigma \sigma^T$ to be invertible for $\nu \otimes \p$-almost all $(t,\omega)$,
is ${\cal P}$-convex if and only if $Q$ is. Moreover, it follows from \cite{cheridito11}
that $P$ is  sequentially closed.
In the sequel, we assume also that $g$ is positively homogeneous in $z$ that's mean
\\
$\mathbf{H2}:\forall (t,z)\in \R \times \R^d,\forall \lambda \geq 0 ; \;\; g(t, \lambda z)= \lambda g(t,z) \P.a.s.$

\section{Exponential Utility}\label{sec:exp}
In this case $
u(x) = - \exp(-\gamma x),
$ which correspond that  investor has constant absolute risk aversion
$- u''(x)/u'(x) = \gamma > 0$. 
We assume that the set $\Acr$ of possible investment strategies contains at least one
bounded process $\bar{p}$. Moreover, we assume that the rate of income
$e$ and the final payment $E$ are both bounded.

Define the bounded positive function $h$ on $[0,T]$ by
$$
h(t) = 1/(1 + T -t) \quad \mbox{if} \quad r = 0
$$
and
$$
h(t) = \frac{r}{1 -(1-r) \exp(-r (T-t))}  \quad \mbox{if} \quad r > 0.
$$
Note that in both cases $h$ solves the quadratic ODE
$$
h'(t) = h(t)(h(t) -r), \quad h(T) = 1.
$$

\begin{Definition}
If $u(x) = - \exp(-\gamma x)$, an admissible strategy consists of 
$(p,c) \in {\cal P}\times {\cal C}$
such that  $\dint_0^T|c_t|dt < \infty \p.a.s$ ,  $\E[(\dint_0^T e^{-\gamma c_t}dt)^2]< \infty$  and
$$
\exists q > 2 \quad  \E\Big[\brak{\int_0^T|p_t|^2}^{\frac{q}{q-2}}+\sup\limits_{0\leq t \leq T}\exp\brak{- \gamma q h(t) X^{p,c}_t}\Big] < \infty  .
$$
\end{Definition}
For this study we begin by the following useful lemma
\begin{Lemma}\label{lemma1}
$\forall Z\in \Pc; \;\; \argmin\limits_{p\in \Ac}(\frac{\gamma}{2}|hp-Z|^2+hp\theta - g(hp-Z))$ is nonempty subset from $\Pc.$ Moreover $\forall p\in \argmin\limits_{p\in \Ac}(\frac{\gamma}{2}|hp-Z|^2+hp\theta - g(hp-Z))$, there are two constants $\kappa_1, \kappa_2$ such that 
\begin{equation}\label{estimatep}
|p|^2\leq \kappa_1|Z|^2+\kappa_2\;\; dt\otimes d\p. a.e .
\end{equation}
\end{Lemma}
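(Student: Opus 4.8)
The plan is to reduce the minimisation to a pointwise problem in $(t,\omega)$ over the constraint set $\Ac$ (which by the Standing assumption is sequentially closed, $\Pc$-stable and contains at least one bounded process $\bar p$) and then to recover a predictable minimiser via the conditional analysis of \cite{cheridito11}. Fix $Z\in\Pc$ and, for $(t,\omega)\in[0,T]\times\Omega$ and $p\in\R^{1\times n}$, set
$$
\Phi(t,\omega,p):=\frac{\gamma}{2}\,\bigl|h(t)p-Z_t(\omega)\bigr|^2+h(t)\,p\,\theta_t(\omega)-g\!\left(t,\omega,h(t)p-Z_t(\omega)\right).
$$
Since $g$ is progressively measurable and, by $\mathbf{(A1)}$, continuous in its $z$-argument, $h$ is continuous on $[0,T]$ with $h_0:=\min_{[0,T]}h>0$, and $\theta,Z$ are predictable, $\Phi$ is measurable in $(t,\omega)$ for each fixed $p$ and continuous in $p$ for each fixed $(t,\omega)$. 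The argmin in the statement is understood pointwise, and the two points to prove are that it admits an $\Ac$-valued predictable selection and that this selection satisfies \eqref{estimatep}.

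First I would establish coercivity of $\Phi$ in $p$. Writing $w:=h(t)p-Z_t(\omega)$, assumption $\mathbf{(A2)}$ together with $\mathbf{(A1)}$ gives $|g(t,\omega,w)|=|g(t,\omega,w)-g(t,\omega,0)|\le\phi(|w|)\le\mu(|w|+1)$, while $|\theta|\le\kappa$ gives $|h(t)p\,\theta_t(\omega)|=|(w+Z_t(\omega))\theta_t(\omega)|\le\kappa(|w|+|Z_t(\omega)|)$. A Young inequality absorbing the linear terms into $\tfrac{\gamma}{2}|w|^2$ then yields a constant $C_1=C_1(\gamma,\kappa,\mu)$ with
$$
\Phi(t,\omega,p)\ \ge\ \frac{\gamma}{4}\,\bigl|h(t)p-Z_t(\omega)\bigr|^2-C_1\bigl(1+|Z_t(\omega)|^2\bigr),
$$
and since $|h(t)p-Z_t(\omega)|\ge h_0|p|-|Z_t(\omega)|$ this forces $\Phi(t,\omega,p)\to+\infty$ as $|p|\to\infty$, $dt\otimes d\p$-a.e. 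In the other direction, evaluating at $\bar p$ and using that $h,\theta,\bar p$ are bounded and $g$ has linear growth in $z$ gives $\Phi(t,\omega,\bar p)\le C_2(1+|Z_t(\omega)|^2)$, $dt\otimes d\p$-a.e., for some constant $C_2$.

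Next I would invoke the conditional analysis of \cite{cheridito11}: a sequentially closed, $\Pc$-stable subset of $\Pc^{1\times n}$ is the set of predictable selections of a measurable, closed-valued correspondence $(t,\omega)\mapsto\Ac_t(\omega)\subset\R^{1\times n}$, which is nonempty $dt\otimes d\p$-a.e.\ since $\bar p\in\Ac$. As $\Phi$ is measurable in $(t,\omega)$, continuous and coercive in $p$, and bounded on $\Ac$ by the a.e.\ finite process $\Phi(\cdot,\cdot,\bar p)$, the quantity $\essinf_{p\in\Ac}\Phi(\cdot,\cdot,p)$ coincides $dt\otimes d\p$-a.e.\ with the pointwise minimum of $\Phi(t,\omega,\cdot)$ over the closed set $\Ac_t(\omega)$, the pointwise argmin correspondence $(t,\omega)\mapsto\argmin_{p\in\Ac_t(\omega)}\Phi(t,\omega,p)$ is nonempty, closed-valued and measurable, and a measurable selection theorem produces a predictable, $\Ac$-valued process realising it. This shows the argmin set of the Lemma is a nonempty (and again sequentially closed, $\Pc$-stable) subset of $\Pc$. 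I expect this selection step --- producing a measurable minimiser that actually stays in $\Ac$, via the conditional compactness coming from coercivity --- to be the heart of the matter and the main obstacle: it is precisely where $\Pc$-stability and sequential closedness of $\Ac$ are used, without which a pointwise minimiser need not be an admissible strategy.

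Finally, \eqref{estimatep} follows by comparison: for any $p$ in the argmin, $\Phi(t,\omega,p)\le\Phi(t,\omega,\bar p)\le C_2(1+|Z_t(\omega)|^2)$, so the coercivity bound gives $\tfrac{\gamma}{4}|h(t)p-Z_t(\omega)|^2\le(C_1+C_2)(1+|Z_t(\omega)|^2)$, hence $|h(t)p-Z_t(\omega)|^2\le\tilde C(1+|Z_t(\omega)|^2)$; combining this with $|p|^2\le 2h_0^{-2}\bigl(|h(t)p-Z_t(\omega)|^2+|Z_t(\omega)|^2\bigr)$ yields $|p|^2\le\kappa_1|Z|^2+\kappa_2$ $dt\otimes d\p$-a.e., with $\kappa_1,\kappa_2$ depending only on $\gamma,\kappa,\mu,h_0$ and $\sup|\bar p|$.
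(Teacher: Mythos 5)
Your proposal is correct and follows essentially the same route as the paper: both establish coercivity of the functional in $p$ by absorbing the linear terms $hp\theta$ and $g(hp-Z)$ into $\tfrac{\gamma}{2}|hp-Z|^2$ via Young's inequality, compare with the bounded strategy $\bar p$ to get $L^0$-boundedness of the sublevel set, invoke the conditional analysis of Cheridito--Kupper--Vogelpoth (the paper cites their Theorem 4.4 directly, you unpack it as a measurable-selection argument on the argmin correspondence) to obtain a predictable minimiser in $\Ac$, and then read off the quadratic bound \eqref{estimatep} from the same comparison. If anything, your write-up is more careful than the paper's (which has a harmless slip writing $2/h$ for $2/h^2$ and silently absorbs constants), but the mathematical content is the same.
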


\begin{proof}
Let $f_Z:\Ac \longrightarrow \bar{L}; p\mapsto \frac{\gamma}{2}|hp-Z|^2+hp\theta - g(hp-Z).$
\\$f_Z$ is sequentially continuous stable and the set 
$$\Gamma:=\{p\in \Ac\;\;s.t\;\; f(p)\leq f(\bar{p})\}$$
is $L^0-$bounded, indeed, let $p \in \Gamma$ we have
$$
\begin{aligned}
\frac{\gamma}{2}|hp-Z|^2+hp\theta - g(hp-Z) \leq \frac{\gamma}{2}|h\bar{p}-Z|^2+h\bar{p}\theta - g(h\bar{p}-Z)
\end{aligned}
$$
Using Cauchy Schwartz inequality and assumption on $g$, we obtain
$$
\begin{aligned}
\frac{\gamma}{2}|hp-Z|^2- |hp-Z||\theta| - \mu(|hp-Z|+1) \leq \frac{\gamma}{2}|h\bar{p}-Z|^2+|h\bar{p}-Z||\theta| + \mu(|h\bar{p}-Z|+1)
\end{aligned}
$$
this implies, for all non negative real $\alpha$
$$
\begin{aligned}
\frac{\gamma}{2}|hp-Z|^2- \frac{\kappa+\mu}{2}\alpha^2 |hp-Z|^2-\frac{\kappa+\mu}{2\alpha^2} - \mu \leq \frac{\gamma}{2}|h\bar{p}-Z|^2+\frac{\kappa +\mu}{2}|h\bar{p}-Z|^2 + \frac{\kappa +\mu}{2} + \mu
\end{aligned}
$$
By choosing $\alpha$  small enough so that $\delta:=\frac{\gamma}{2}-\frac{\kappa+\mu}{2}\alpha^2$ is positive, there are two constants $a$ and $b$ such that 
$$|hp-Z|^2 \leq a|h\bar{p}-Z|^2+b$$ 
Thus, 
\begin{equation}\label{estimatep2}
|p|^2 \leq \frac{2}{h}(|hp-Z|^2+|Z|^2)\leq \frac{2}{h}(|h\bar{p}-Z|^2+|Z|^2)
\end{equation}
$$$$
So $\Gamma$ is $L^0-$bounded. Consequently, using Theorem 4.4 in \cite{cheridito11}, there exists at least $\hat{p} \in \Ac$ such that $\essinf\limits_{p\in \Ac}f(p)=f(\hat{p}).$
\\
Estimate \ref{estimatep} is an immediate consequence of the inequality \ref{estimatep2} and  the fact that $h$ and $\bar{p}$ are bounded.
\end{proof}

Consider the BSDE
\be \label{BSDEexp}
Y_t = F + \int_t^T f(s,Y_s,Z_s)ds - \int_t^T Z_s dW_s
\ee
with driver
$$
f(t,y,z)
= h(t)(e_t-y)+ \esssup\limits_{p\in {\Acr} }[ g(t, h(t)p-z)+ h(t)p\theta_t -\frac{\gamma}{2} |h(t)p-z|^2]+\frac{h(t)}{\gamma} \brak{\log \frac{h(t)}{\alpha} - 1}
+ \frac{\delta_t}{\gamma}.
$$
where ess\,sup denotes the smallest upper bound with respect to the $\nu \otimes \p$-a.e. order.
Since $\theta$, $e$, $E$ and $h$ are bounded, using Lemma \ref{lemma1}, there exists a positive constant $K $ such that
$$|f(t,y,z)| \le K(1+ |y| + |z|^2)$$
and
$$|f(t,y_1,z_1)-f(t,y_2,z_2)|\le K( |y_1-y_2|+(1+|z_1|+|z_2|)|z_1-z_2|).$$
So it follows from Kobylanski \cite{Kobylanski} that equation \eqref{BSDEexp} has a unique solution
$(Y,Z)$ such that $Y$ is bounded and from  Morlais \cite{Morlais} that $Z$
belongs to ${\cal P}^{1 \times n}_{\rm BMO}$.
According Briand and Hu \cite{BH08}, $(Y,Z)$ satisfies for each $p > 1$
\begin{equation}\label{YZestimation}
\mathbb{E}\left[\exp \left(\gamma p \sup _{0 \leq t \leq T}\left|Y_t\right|\right)+\left(\int_0^T\left|Z_s\right|^2 d s\right)^{p / 2}\right] \leq C \mathbb{E}\left[\exp \left(2p K\left(|F|+KT\right)\right)\right]. 
\end{equation}

\begin{Theorem} \label{thmexp}
The optimal value of the optimization problem \eqref{opt} for
$u(x) = - \exp(-\alpha x)$ over all admissible strategies is
\be \label{optvalueexp}
v(0)=- \exp \edg{- \gamma (h(0) x + Y_0)},
\ee
and there is at least one optimal strategy $(p^*,c^*)$ given by
\be \label{optstexpp}
p^* \in \argmax{g(t, h(t)p-Z)- h(t)p\theta_t -\frac{\gamma}{2} |h(t)p-Z|^2} 
\ee
and 
\beas \label{optstexpc}
 c^*_t = && h(0)x+\int_0^t h(s)(p_s^*\theta_s+e_s-Y_s+\frac{1}{\gamma}\ln(\frac{h(s)}{\gamma}))ds+\int_0^t h(s)p^*_sdW_s + Y_t - \frac{1}{\gamma} \log \frac{h(t)}{\alpha}.
\eeas
\end{Theorem}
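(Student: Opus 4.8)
The plan is to run the $g$-martingale optimality scheme outlined before the statement. To each admissible pair $(p,c)$ I would attach the process
\[
R^{(p,c)}_t = \int_0^t \alpha\, e^{-\int_0^s \delta_u\,du}\, u(c_s)\,ds \;-\; e^{-\int_0^t \delta_s\,ds}\exp\!\Big(-\gamma\big(h(t)X^{(p,c)}_t + Y_t\big)\Big),
\]
where $(Y,Z)$ solves the quadratic BSDE \eqref{BSDEexp}. Because $h(T)=1$, the terminal value $R^{(p,c)}_T$ is exactly the random variable to which $\Ec_g$ is applied in \eqref{opt}, while $R^{(p,c)}_0 = -\exp(-\gamma(h(0)x+Y_0))$ is one and the same constant for all $(p,c)$. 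Hence it suffices to show that every $R^{(p,c)}$ is a $g$-supermartingale and that $R^{(p^*,c^*)}$ is a $g$-martingale; then monotonicity and time consistency of $\Ec_g$ give $\Ec_g[R^{(p,c)}_T]\le R^{(p,c)}_0 = R^{(p^*,c^*)}_0 = \Ec_g[R^{(p^*,c^*)}_T]$, which is simultaneously the value \eqref{optvalueexp} and the optimality of $(p^*,c^*)$.

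The core of the proof is an Itô expansion of $R^{(p,c)}$. Using \eqref{XSDE}, $dY_t=-f(t,Y_t,Z_t)\,dt+Z_t\,dW_t$, the identity $h'=h(h-r)$ and the elementary rewriting $e^{-\gamma c_t}=\exp(-\gamma(h(t)X^{(p,c)}_t+Y_t))\,\exp(\rho_t)$ with $\rho_t:=\gamma(h(t)X^{(p,c)}_t+Y_t-c_t)$, one checks that all the $X_t$-dependent drift terms cancel and that $dR^{(p,c)}_t=a^{(p,c)}_t\,dt+b^{(p,c)}_t\,dW_t$ with $b^{(p,c)}_t=\Lambda_t\,(h(t)p_t+Z_t)$, where $\Lambda_t:=\gamma\,e^{-\int_0^t\delta}\exp(-\gamma(h(t)X^{(p,c)}_t+Y_t))>0$. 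By the positive homogeneity \textbf{(H2)} of $g$ in $z$, $g(t,b^{(p,c)}_t)=\Lambda_t\,g(t,h(t)p_t+Z_t)$, so
\[
a^{(p,c)}_t+g\big(t,b^{(p,c)}_t\big)=\Lambda_t\Big(\Xi\big(t,\omega,Y_t,Z_t,p_t,\rho_t\big)-f(t,Y_t,Z_t)\Big),
\]
where $\Xi$ splits into the strictly concave term $\tfrac1\gamma\big(-\alpha e^{\rho}+h(t)\rho\big)$ in $\rho$ plus $h(t)(e_t-Y_t)+\tfrac{\delta_t}{\gamma}$ plus the term $g(t,h(t)p+Z_t)+h(t)p\theta_t-\tfrac\gamma2|h(t)p+Z_t|^2$ in $p$. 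Maximising over $\rho\in\R$ is one-dimensional, with value $\tfrac{h(t)}{\gamma}(\log\tfrac{h(t)}{\alpha}-1)$ attained at $\rho^*=\log(h(t)/\alpha)$, i.e. at $c^*_t=h(t)X^{(p,c)}_t+Y_t-\tfrac1\gamma\log(h(t)/\alpha)$; maximising over $p\in\Acr$ is precisely the $\esssup$ entering the driver $f$, which by Lemma \ref{lemma1} (with $Z=Z_t$) is attained at a predictable $p^*\in\Acr$ obeying \eqref{estimatep}. Since by construction $f(t,Y_t,Z_t)=\sup_{\rho\in\R,\,p\in\Acr}\Xi(t,\omega,Y_t,Z_t,p,\rho)$, we get $a^{(p,c)}_t+g(t,b^{(p,c)}_t)\le0$ $dt\otimes d\P$-a.e., with equality exactly when $(p_t,c_t)=(p^*_t,c^*_t)$; and since the feedback $c^*$ turns \eqref{XSDE} into a closed-form expression for $h(t)X^{(p^*,c^*)}_t$, this $c^*$ is the process in \eqref{optstexpc}.

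To pass from this pointwise sign condition to the $g$-(super)martingale property I would use comparison: for $s<t$, $\Ec_g[R^{(p,c)}_t\mid\Fc_s]$ is the value at $s$ of the BSDE on $[s,t]$ with terminal datum $R^{(p,c)}_t$ and generator $g$, whereas $R^{(p,c)}$ itself solves on $[s,t]$ the BSDE with the same terminal datum and generator $-a^{(p,c)}_\cdot$; since $-a^{(p,c)}_u\ge g(u,b^{(p,c)}_u)$, the comparison theorem yields $R^{(p,c)}_s\ge\Ec_g[R^{(p,c)}_t\mid\Fc_s]$, with equality for $(p^*,c^*)$. The integrability in the definition of admissibility is what keeps $R^{(p,c)}$ in a space where $\Ec_g$ and this comparison apply: $\E[(\int_0^T e^{-\gamma c_t}dt)^2]<\infty$ and $\E[\sup_t\exp(-\gamma q\,h(t)X^{p,c}_t)]<\infty$ put $R^{(p,c)}_T$ in $L^2(\Fc_T)$ and give enough uniform integrability, and $\E[(\int_0^T|p_t|^2)^{q/(q-2)}]<\infty$ controls the stochastic integral. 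Finally $(p^*,c^*)$ must be shown admissible: \eqref{estimatep} together with $Z\in\Pc^{1\times n}_{\mathrm{BMO}}$ controls the moments of $\int_0^T|p^*_t|^2dt$; substituting the feedback into \eqref{XSDE} gives a linear SDE with BMO coefficients, so the exponential moments of $X^{p^*,c^*}$ follow as in \eqref{YZestimation}; and $e^{-\gamma c^*_t}=\tfrac{h(t)}{\alpha}\exp(-\gamma(h(t)X^{p^*,c^*}_t+Y_t))$ is square-integrable in $t$ since $Y$ and $h$ are bounded. This yields \eqref{optvalueexp} with optimal strategy \eqref{optstexpp}--\eqref{optstexpc}.

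The hard part is this last step. The slogan ``drift $+\,g\le 0\iff g$-supermartingale'' is justified only once $R^{(p,c)}$ lives in a function space on which the conditional $g$-expectation and the comparison theorem for quadratic BSDEs are available, and verifying this for every admissible $(p,c)$ — and especially for $(p^*,c^*)$, whose consumption rate is defined only implicitly through its own wealth process — is the technically heaviest point; it is precisely what the admissibility conditions and the a priori bound \eqref{estimatep} of Lemma \ref{lemma1} are designed to handle.
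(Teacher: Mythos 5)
Your proposal follows essentially the same route as the paper: the same family of processes $R^{(p,c)}$, the same It\^o decomposition using the positive homogeneity \textbf{(H2)} to pull the positive factor $\Lambda_t$ out of $g$, the same splitting of the drift into a strictly concave part in the consumption variable (maximised at the feedback $c^*_t=h(t)X_t+Y_t-\tfrac1\gamma\log(h(t)/\alpha)$) plus the $\esssup$ over $p$ that is absorbed into the driver $f$, and the same verification of admissibility of $(p^*,c^*)$ via Lemma \ref{lemma1} and the estimate \eqref{YZestimation}. The only discrepancy is a sign convention ($+Y_t$ in your exponent versus $-Y_t$ in the paper's $R^{(p,c)}$, which would force $z\mapsto -z$ and $F\mapsto -F$ in the BSDE \eqref{BSDEexp}); this is pure bookkeeping and in fact mirrors an inconsistency already present between the paper's stated value \eqref{optvalueexp} and its own definition of $R^{(p,c)}$ in the proof.
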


\begin{proof}
For every admissible strategy $(c,p)$, we consider the continuous stochastic process
$$
R^{(p,c)}_t = -\dint_0^t \alpha e^{-\int_0^u \delta_sds} e^{-\gamma c_u} du - e^{-\int_0^t \delta_sds} \exp\brak{-\gamma (h(t)X^{(p,c)}_t -Y_t)}.
$$
We have
\begin{itemize}
\item $\forall (p,c) \in \Acr \times \Ccr; \E[\sup\limits_{0\leq t \leq T}(R^{(p,c)}_t)^2] < \infty$
\item $\forall (p,c) \in \Acr \times \Ccr; R^{(p,c)}_T = -\dint_0^T \alpha e^{-\int_0^u \delta_sds} e^{-\gamma c_u} du - e^{-\int_0^T \delta_sds} \exp\brak{-\gamma (h(t)X^{(p,c)}_T -F)},$
\item $\forall t\in [0,T]$,  $\forall (p,c), (p',c') \in \Acr \times \Ccr $ such that $(p,c)_., (p',c')_.$ on $[0,t]$, we have  $$R_t^{(p,c)}=R_t^{(p',c')}, $$
\end{itemize}
Using it\^o formula, we obtain
$$
dR^{(p,c)}_t =  \gamma e^{- \int_0^t \delta_u du} e^{- \gamma \brak{h(t) X^{(p,c)}_t - Y_t}}
\edg{-g(t,h(t) p_t - Z_t)+ (h(t) p_t - Z_t) dW_t +  A^{(p,c)}_t dt},
$$
$$
dR^{(p,c)}_t = \edg{-g(Z_t^{p,c})+ Z_t^{p,c} dW_t +  A^{(p,c)}_t dt},
$$
where
\beas
Z_t^{p,c}=\gamma e^{- \int_0^t \delta_u du} e^{- \gamma \brak{h(t) X^{(p,c)}_t - Y_t}}(h(t) p_t - Z_t),
\eeas
and
\beas
A^{(p,c)}_t = &&\gamma e^{- \int_0^t \delta_u du} e^{- \gamma \brak{h(t) X^{(p,c)}_t - Y_t}}[g(t,h(t) p_t - Z_t)+ h(t) p_t \theta_t - \frac{\gamma}{2} |h(t) p_t - Z_t|^2 - f(t,Y_t,Z_t)\\
&& + h(t) (e_t - c_t) - \frac{\alpha}{\gamma} e^{\gamma \brak{h(t) X^{(p,c)}_t + Y_t}} e^{-\gamma c_t}
+ h'(t) X^{(p,c)}_t + h(t) r X^{(p,c)}_t + \frac{\delta_t}{\gamma}]  \\
=&&\gamma e^{- \int_0^t \delta_u du} e^{- \gamma \brak{h(t) X^{(p,c)}_t - Y_t}}[ g(t,h(t) p_t - Z_t)+ h(t) p_t \theta_t - \frac{\gamma}{2} |h(t) p_t - Z_t|^2 - f(t,Y_t,Z_t)\\
&& + h(t) (e_t - c_t) - \frac{\alpha}{\gamma} e^{\gamma \brak{h(t) X^{(p,c)}_t + Y_t}} e^{-\gamma c_t}
+\frac{\delta_t}{\gamma} ]
\eeas
let's check that process $A^{(p,c)}$ satisfies
$\E[\int_0^T(A^{(p,c)}_s)^2ds] < \infty$ for each $t > 0.$ 
\\For all admissible strategy $(p,c)$, we have
\beas
\E[\dint_0^T |Z_t^{(p,c)}|^2dt]&=\E[\dint_0^T \gamma^2 e^{-2 \int_0^t \delta_u du} e^{- 2\gamma \brak{h(t) X^{(p,c)}_t - Y_t}}|(h(t) p_t - Z_t)|^2dt] \\
& \leq K \E[\sup\limits_{0 \leq t \leq T} e^{- \gamma h(t) X^{(p,c)}_t } \dint_0^T |(h(t) p_t - Z_t)|^2dt]\\
& \leq K \E[\sup\limits_{0 \leq t \leq T} e^{- q\gamma h(t) X^{(p,c)}_t }]^{\frac{2}{q}} \E[\dint_0^T |(h(t) p_t - Z_t)|^2dt)^{\frac{q}{q-2}}]^{\frac{q-2}{q}}<\infty.
\eeas
Where the first inequality come from the boundness of $\delta$ and $Y$ and the second is the Holder inequality. Which implies that
$\int_0^T Z_t^{(p,c)} dW_t  $
and $\int_0^T g(Z_t^{(p,c)} )t  $ are both in $L^2(\Omega, \Fc,\p).$
So, for all admissible strategy $(p,c),$ we have $\int_0^T A^{(p,c)}_t dt \in L^2(\Omega, \Fc,\p).$
\\On the other hand, for fixed $(t,\omega) \in [0,T] \times \Omega$, function of the real variable 
$$
z \mapsto - h(t) z - \frac{\alpha}{\gamma} e^{\gamma \brak{h(t) X^{(c,p)}_t + Y_t}} e^{-\gamma z}
$$
is a strictly concave function that is equal to its maximum
$$
\frac{h(t)}{\gamma} \log \frac{h(t)}{\alpha} - h^2(t) X^{(c,p)}_t - h(t) Y_t
- \frac{h(t)}{\gamma}
$$
if and only if
$$
z = h(t) X^{(c,p)}_t + Y_t - \frac{1}{\gamma} \log \frac{h(t)}{\alpha}.
$$
Therefore, one has
\bea
\notag
&& h(t) (e_t - c_t) - \frac{\alpha}{\gamma} e^{\gamma \brak{h(t) X^{(c,p)}_t + Y_t}} e^{-\gamma c_t}
+ h'(t) X^{(c,p)}_t + h(t) r X^{(c,p)}_t + \frac{\delta_t}{\gamma}\\
\notag
&\le& h(t) e_t + \frac{h(t)}{\gamma} \log \frac{h(t)}{\alpha} - h^2(t) X^{(c,p)}_t - h(t) Y_t
- \frac{h(t)}{\gamma}
+ h'(t) X^{(c,p)}_t + h(t) r X^{(c,p)}_t + \frac{\delta_t}{\gamma}\\
\label{disX}
&=& h(t) e_t + \frac{h(t)}{\gamma} \log \frac{h(t)}{\alpha} - h(t) Y_t - \frac{h(t)}{\gamma}
+ \frac{\delta_t}{\gamma},
\eea

It follows that, for all admissible strategy $(p,c)$,
\beas
A^{(p,c)}_t \leq  &&\gamma e^{- \int_0^t \delta_u du} e^{- \gamma \brak{h(t) X^{(p,c)}_t - Y_t}}[ g(t,h(t) p_t - Z_t)+ h(t) p_t \theta_t - \frac{\gamma}{2} |h(t) p_t - Z_t|^2 - f(t,Y_t,Z_t)\\
&& + h(t) e_t + \frac{h(t)}{\gamma} \log \frac{h(t)}{\alpha} - h(t) Y_t - \frac{h(t)}{\gamma}
+ \frac{\delta_t}{\gamma}]  \\
\leq && 0 \;\;\;  \mathbb{P}\text{.a.s}
\eeas
So, for all $(p,c) \in \Acr$, $R^{(p,c)}$ is a $g-$ supermartingale.
It remains to be verified that there is $(p^*,c^*)\in \Acr $ such  that $R^{(p^*,c^*)}$ is a $g-$ martingale,  for this we must ensure that $A^{(p^*,c^*)}_t=0 \;\;dt\times d\P .a.e.$  \\
Let  $ p^* \in \argmax{g(t, h(t)p-Z)- h(t)p\theta_t -\frac{\gamma}{2} |h(t)p-Z|^2},$ we have 
\beas
&&g(t,h(t) p^*_t - Z_t)+ h(t) p^*_t \theta_t - \frac{\gamma}{2} |h(t) p^*_t - Z_t|^2 - f(t,Y_t,Z_t)\\
&& + h(t) (e_t-Y_t +\frac{h(t)}{\gamma}( \log \frac{h(t)}{\alpha}-1)+ \frac{\delta_t}{\gamma}  =0 \;\;dt\times d\P .a.e.
\eeas
On the other hand, the linear  SDE
$$dX_t=(r-h(t))X_tdt + (p^*_t\theta_t + e_t-Y_t+\frac{1}{\gamma}\ln (\dfrac{h(t)}{\alpha}))dt+p^*_tdW_t, X_0=x $$ 
has a unique solution given by
$$X_t=\dfrac{1}{h(t)}(h(0)x+\int_0^th(s)(p^*_s\theta_s + e_s -Y_s+\frac{1}{\gamma}\ln (\dfrac{h(s)}{\alpha}))ds +\int_0^t h(s)p^*_sdW_s).$$
We have , for all $q \geq 2$, 
\beas
e^{-q\gamma  h(t) X_t}&& = e^{\displaystyle{-q\gamma  h(0)x -q\gamma  \int_0^t h(s)p^*_s \theta_s ds -q\gamma  \int_0^t h(s)p^*_sdW_s -q\gamma  \int_0^t h(s)( e_s -Y_s+\frac{1}{\gamma}\ln (\frac{h(s)}{\alpha}))ds } }\\
&& \leq d_1  e^{\displaystyle{ -q\gamma  \int_0^t h(s)p^*_s \theta_s ds -q\gamma  \int_0^t h(s)p^*_sdW_s } } \\
&& \leq d_1 e^{\displaystyle{  (-q\gamma  \int_0^t h(s)p^*_sdW_s-q\gamma^2  \int_0^t h^2(s)|p^*|^2_s \theta_s ds )} }  e^{\displaystyle{ (-q\gamma  \int_0^t h(s)p^*_s \theta_s ds +q\gamma^2  \int_0^t h^2(s)|p^*|^2_s \theta_s ds) } } \\
&& \leq d_2 \Big(e^{\displaystyle{  (-2q\gamma  \int_0^t h(s)p^*_sdW_s-2q\gamma^2  \int_0^t h^2(s)|p^*|^2_s \theta_s ds )} }
\\&& +  e^{\displaystyle{ (-2q\gamma  \int_0^t h(s)p^*_s \theta_s ds +2q\gamma  \int_0^t h^2(s)|p^*|^2_s \theta_s ds) } } \Big)\\
&&\leq d_3 ((\Ec(-2\gamma  \int_0^t h(s)p^*_sdW_s))^q +  \exp( d_4\int_0^T |p^*|^2_s  ds)),
\eeas
where $d_1,d_2,d_3$ and $d_4$ are positive constants. So, 
\begin{equation}\label{maxestimate}
\sup\limits_{0\leq t \leq T} e^{-\gamma  h(t) X_t} \leq d_3 (\sup\limits_{0\leq t \leq T}\Ec(-2\gamma  \int_0^t h(s)p^*_sdW_s)^q + \exp( d_4 \int_0^T |p^*|^2_s  ds))\;\;\; \p.a.s
\end{equation}
\\ Note that for all 
\begin{equation}\label{novikovcondition}
\forall\lambda>0,\quad\quad \E[\exp{(\lambda\dint_0^T |p^*|^2_s  ds)}]< \infty.
\end{equation}
Indeed, using Lemma \ref{lemma1}, convex inequality  and estimation (\ref{YZestimation}), there two positive constants $C_1$ and $M_1$ such that  
$$\forall k \in \mathbb{N} \setminus \{0\},\;\; \mathbb{E}\left[\left(\lambda \int_0^T\left|p^*_s\right|^2 d s\right)^{k}\right]<C_1M_1^k.$$
Which implies that
$$\forall n \in \mathbb{N} \setminus \{0\},\;\; \mathbb{E}\left[\sum\limits_{k=1}^n\frac{1}{k!}\left(\lambda \int_0^T\left|p^*_s\right|^2 d s\right)^{k}\right]<{C_1}(e^{M_1}-1).$$
By Fatou lemma, we deduce that 
$$\E[\exp{(\lambda\int_0^T |p^*|^2_s  ds)}]<C_1(e^{M_1}-1)+1< \infty.$$
The continuous process $\brak{-2\gamma  \dint_0^t h(s)p^*_sdW_s}_{0\leq t \leq T}$  is a BMO martingale then the process $M_.=(\Ec(-2\gamma \dint_0^. h(s)p^*_sdW_s))_{0\leq t \leq T}$ is uniformly integrable martingale. By Doob’s martingale maximal inequalities, we have, forall $q>1$
$$\E[(\sup\limits_{0\leq t \leq T}|M_t|)^q] \leq (\dfrac{q}{q-1})^q\E(|M_T|^q).$$
\beas
 (M_T)^{q}&& = e^{\displaystyle{-2q\gamma \dint_0^T h(s)p^*_sdW_s-2q\gamma^2 \int_0^Th^2(s)|p_s^*|^2ds}} \\
&&= e^{\displaystyle{-2q\gamma \int_0^Th(s) p_s^*dW_s-4q^2\gamma^2 \int_0^Th^2(s)|p_s^*|^2ds}}\times e^{\displaystyle{2q(2q-1)\gamma^2 \int_0^Th^2(s)|p_s^*|^2ds}} \\
&& \leq \dfrac{1}{2}\brak{e^{\displaystyle{-4q\gamma \int_0^Th(s) p_s^*dW_s-8q^2\gamma^2 \int_0^Th^2(s)|p_s^*|^2ds}}+ e^{\displaystyle{4q(2q-1)\gamma^2 \int_0^Th^2(s)|p_s^*|^2ds}}}
\eeas
In the previous inequality, the first term is a true martingale therefore of finite expectation, the second term is also of finite expectation taking into account (\ref{novikovcondition}).

Which means that $\E[\sup\limits_{0\leq t \leq T} e^{-\gamma q  h(t) X_t}] < \infty.$
By choosing 
\beas
c_t^*&& =h(t)X_t+Y_t-\frac{1}{\gamma}\ln (\dfrac{h(t)}{\alpha})\\
&& =h(0)x+\int_0^th(s)(p^*_s\theta_s + e_s -Y_s+\frac{1}{\gamma}\ln (\dfrac{h(s)}{\alpha}))ds +\int_0^t h(s)p^*_sdW_s+Y_t-\frac{1}{\gamma}\ln (\dfrac{h(t)}{\alpha}),
\eeas
Since $Y_t-\frac{1}{\gamma}\ln (\dfrac{h(t)}{\alpha})$ is bounded, there is a positive constant $d_5$ such that
$$(\int_0^Te^{-\gamma c_t^*})^2 \leq \int_0^Te^{-2\gamma c_t^*} \leq d_5 \sup\limits_{0\leq t \leq T} e^{- 2 \gamma   h(t) X_t}$$
and so $\E[(\int_0^Te^{-\gamma c_t^*})^2]< \infty.$
we have $X=X^{(p^*,c^*)}$ and so,
\begin{equation*} 
h(t) X^{(c^*,p^*)}_t + Y_t - \frac{1}{\gamma} \log \frac{h(t)}{\alpha}=c^*_t\;\; dt\times d\P .a.e.
\end{equation*}
It follows that,
$$
- h(t) c^*_t - \frac{\alpha}{\gamma} e^{\gamma \brak{h(t) X^{(c^*,p^*)}_t + Y_t}} e^{-\gamma c^*_t}
=
\frac{h(t)}{\gamma} \log \frac{h(t)}{\alpha} - h^2(t) X^{(c^*,p^*)}_t - h(t) Y_t
- \frac{h(t)}{\gamma}
$$
Consequently,
\beas
A^{(p^*,c^*)}_t = &&\gamma e^{- \int_0^t \delta_u du} e^{- \gamma \brak{h(t) X^{(p^*,c^*)}_t - Y_t}}[g(t,h(t) p^*_t - Z_t)+ h(t) p^*_t \theta_t - \frac{\gamma}{2} |h(t) p^*_t - Z_t|^2 - f(t,Y_t,Z_t)\\
&& + h(t) (e_t - c^*_t) - \frac{\alpha}{\gamma} e^{\gamma \brak{h(t) X^{(p^*,c^*)}_t + Y_t}} e^{-\gamma c^*_t}
+ h'(t) X^{(p^*,c^*)}_t + h(t) r X^{(p^*,c^*)}_t + \frac{\delta_t}{\gamma}]  \\
=&&\gamma e^{- \int_0^t \delta_u du} e^{- \gamma \brak{h(t) X^{(p^*,c^*)}_t - Y_t}}[ g(t,h(t) p^*_t - Z_t)+ h(t) p^*_t \theta_t - \frac{\gamma}{2} |h(t) p^*_t - Z_t|^2 - f(t,Y_t,Z_t)\\
&& + h(t) e_t +\frac{h(t)}{\gamma} \log \frac{h(t)}{\alpha} - h^2(t) X^{(c^*,p^*)}_t - h(t) Y_t
- \frac{h(t)}{\gamma} 
+ h'(t) X^{(p^*,c^*)}_t + h(t) r X^{(p^*,c^*)}_t + \frac{\delta_t}{\gamma} ] \\
=&&\gamma e^{- \int_0^t \delta_u du} e^{- \gamma \brak{h(t) X^{(p^*,c^*)}_t - Y_t}}[ g(t,h(t) p^*_t - Z_t)+ h(t) p^*_t \theta_t - \frac{\gamma}{2} |h(t) p^*_t - Z_t|^2 - f(t,Y_t,Z_t)\\
&& + h(t) (e_t-Y_t +\frac{h(t)}{\gamma}( \log \frac{h(t)}{\alpha}-1)+ \frac{\delta_t}{\gamma}]  =0 \;\;dt\times d\P .a.e.
\eeas

\end{proof}
\begin{Remark}\label{remexp}
In cheridito and all the proof of theorem \ref{thmexp} is based on the BMO character of the process $Z$. In ours situation, estimation (\ref{YZestimation}) plays a fundamental role in the proof. Thanks to this estimation, a slight modification in the set of admissible strategies allows us to assume that $e$ and $F$ are not bounded.
\end{Remark}
\begin{Example}
In the case of linear generator $g(t,z)=\eta_t.z$, the driver $f$ is given by
$$
f(t,y,z)
= h(t)(e_t-y)-\frac{\gamma}{2}{\rm dist}^2_t(z+\frac{\theta-\eta}{\gamma}, hP)+z\theta_t+\dfrac{1}{2\gamma}|\theta_t-\eta_t|^2+\frac{h(t)}{\gamma} \brak{\log \frac{h(t)}{\alpha} - 1}
+ \frac{\delta_t}{\gamma}.
$$
So, by setting $\eta\equiv 0$, we find the results of Cheridito et all \cite{cheridito11}.
If there is no constraint, we have ${\rm dist}_t(z+\frac{\theta-\eta}{\gamma}, hP)=0$, and so BSDE(\ref{BSDEexp}) is a linear BSDE whose solution is given by

$$
Y_t=\E\left[ \Gamma_T^t F+\int_t^T \Gamma_s^t \varphi_s d s \mid \mathcal{F}_t\right],
$$

where for $s \geq t\;\;\; \Gamma_s^t$ is given by

$$
\Gamma_s^t=\exp(\int_t^s(h(s)-\frac{1}{2}\theta_s^2)ds +\int_t^s\theta_sdW_s)
$$
and
$$
\varphi_t= -h(t)e_t-\dfrac{1}{2\gamma}|\theta_t-\eta_t|^2-\frac{h(t)}{\gamma} \brak{\log \frac{h(t)}{\alpha} - 1}
+ \frac{\delta_t}{\gamma}.
$$
The optimal investment strategy $p^*$ is equal to $Z+\frac{\theta-\eta}{\gamma}$ and the optimal value is given by
$$v(0)=- \exp \edg{- \gamma (h(0) x + \E[ \Gamma_T^0 F +\int_0^T \Gamma_s^0 \varphi_s d s ])}.$$
\end{Example}
\begin{Example}{The $\kappa$- ignorance utility}\\
We consider the one-dimensional context, and we assume there is no constraint on trading strategy ($\Qc=\Pc$). By studing the functionnal
$$p\longmapsto \kappa | h(t)p-z) | + h(t)p\theta_t -\frac{\gamma}{2} |h(t)p-z|^2$$
We obtain that 
$$f(t,y,z)= h(t)(e_t-y)+z+ \kappa^2+ \theta_t^2+ 2\kappa|\theta_t|+\frac{h(t)}{\gamma} \brak{\log \frac{h(t)}{\alpha} - 1}
+ \frac{\delta_t}{\gamma}.$$
$$p^*=(\dfrac{Z}{h}+\dfrac{\kappa+\theta}{h})\textbf{1}_{\theta>0}+(\dfrac{Z}{h}+\dfrac{-\kappa+\theta}{h})\textbf{1}_{\theta<0}.$$
Yet again, we come back to a linear backward stochastic equation. So the optimal value is ginven by
$$v(0)=- \exp \edg{- \gamma (h(0) x + \E[ \Gamma_T^0 F +\int_0^T \Gamma_s^0 \varphi_s d s ])}.$$
where for all $t \geq 0$ and for all $s \geq t\;\;\;$

$$
\Gamma_s^t=\exp(W_s-W_t-\frac{1}{2} (s-t)+\int_t^sh(s)ds)
$$
and
$$
\varphi_t= -h(t)e_t-\kappa^2- \theta_t^2- 2\kappa|\theta_t|-\frac{h(t)}{\gamma} \brak{\log \frac{h(t)}{\alpha} - 1}
+ \frac{\delta_t}{\gamma}.
$$
\end{Example}

\section{Power utility}\label{sec:power}
\label{subsec:power}
In this section, we are interested to the utility maximization problem with respect to

$$
u_\gamma(x)=\frac{1}{\gamma} x^\gamma, \quad x \geq 0, \quad \gamma \in (-\infty,0) \cup (0,1) .
$$

This time, our investor does not incur any  liabilities and maximizes the predicted utility of his wealth at time $T$. 
Therefore, we can parameterize $e$, $c$ and $\pi$ by
$\tilde{e} = e/X$, $\tilde{c} = c/X$ and $\tilde{\pi} = \pi/X$, respectively.
Starting from strictly positive initial wealth $x$, the investor process wealth  $X^{(\tilde{\pi},\tilde{c})}$  evolves according to
$$
\frac{d X^{(c,p)}_t}{X^{(c,p)}_t} = \tilde{p}_t(dW_t + \theta_t dt)
+ (r + \tilde{e}_t - \tilde{c}_t) dt, \quad X^{(c,p)}_0 = x,
$$
where $\tilde{p} = \tilde{\pi} \sigma$.
and one can write
\be \label{posX}
X^{(c,p)}_t =
x \, {\cal E} \brak{\tilde{p} \cdot W^{\q}}_t \exp\brak{\int_0^t( r
+ \tilde{e}_s - \tilde{c}_s) ds} > 0,
\ee
where ${\cal E}$ is the stochastic exponential and $W^{\q}_t = W_t +\int_0^t \theta_sds$. 
\\
The trading strategies $(\tilde{\pi}, \tilde{c})$ are constrained to take values in a sequentially closed and ${\cal P}$-stable set  $\Qc \times \Cc  \subseteq {\cal P}^{1 \times m}\times {\cal P}$. 
We make the following assumption:
\be \label{asspower}
\mbox{there exists a pair }
(\bar{p},\bar{c}) \in \Qc \times \Cc \mbox{ such that } \bar{p} \mbox{ and }
u_{\gamma}(\bar{c}) - \bar{c}   \mbox{ are bounded.}
\ee
Note that this implies that $u_{\gamma}(\bar{c})$ and $\bar{c}$ are both bounded.
\begin{Definition}
An admissible strategy consists of 
$(\tilde{\pi}, \tilde{c}) \in {\cal P}\times {\cal C}$
such that

$$
\tilde{c} \ge 0 \quad \nu \otimes \p \mbox{-a.e.} \quad \mbox{and} \quad\exists q > 2\;\; \text{such that}\;\; \E[(\int_0^T c^\gamma_t dt)^{\frac{q}{q-1}}+(\dint_0^T |\tilde{p}_t|^2dt)^{\frac{q}{q-2}}+ \sup\limits_{0\leq t \leq T} (X^{(p,c)}_t)^{q\gamma}] < \infty  .
$$
 \end{Definition}

Now consider the BSDE
\be \label{BSDEpower}
Y_t =  \int_t^T f(s,Y_s,Z_s)ds - \int_t^T Z_s dW_s
\ee
with driver
\begin{equation} \label{driverpower}
f(t,y,z)
= \gamma \brak{\essinf\limits_{p\in \Pc}\brak{\frac{1-\gamma}{2}|p_t|^2-g(p-\frac{z}{\gamma})+p(z-\theta)}
  - \frac{1}{2\gamma}|z|^2 +\inf\limits_{c\in\Cc} \brak{c-\frac{\alpha}{\gamma}c^\gamma e^{y}} - r - e_t + \frac{\delta_t}{\gamma}}.
\end{equation}
As in Lemma \ref{lemma1}, forall $\gamma \in (-\infty,0) \cup (0,1)$ and $Z \in \Pc, \argmin\limits_{p\in \Pc}\brak{\frac{1-\gamma}{2}|p_t|^2-g(p-\frac{Z}{\gamma})+p(Z-\theta)}$ is nonempty set and for any $p \in \argmin\limits_{p\in \Pc}\brak{\frac{\gamma-1}{2}|p_t|^2-g(p-\frac{Z}{\gamma})+p(Z-\theta)}$  we have a similar estimate as (\ref{estimatep}), namely 
\begin{equation}\label{estimateppower}
|p|^2\leq \beta_1|Z|^2+\beta_2\;\; dt\otimes d\p. a.e .
\end{equation} 
for some positive constant $\beta_1$ and $\beta_2$ regardless of $Z.$
In the same way, the set $\argmin\limits_{c\in\Cc}\brak{c-\frac{\alpha}{\gamma}c^\gamma e^{y}}$ is nonempty set and any $c\in \argmin\limits_{c\in\Cc}\brak{c-\frac{\alpha}{\gamma}c^\gamma e^{y}}$ is bounded as well as $c^\gamma.$
\\Note that $f(t,y,z)$ grows exponentially in $y$.
But it satisfies Assumption (A.1) in Briand and Hu \cite{BH08}.
So it can be deduced from Proposition 3 in \cite{BH08} that \eqref{BSDEpower} has a solution
$(Y,Z)$ such that $Y$ is bounded. That $Z$ is in ${\cal P}^{1 \times n}_{\rm BMO}$
and the uniqueness of such a solution then follow from \cite{Morlais}.

\begin{Theorem} \label{thmpower}
If $u(x) = x^{\gamma}/\gamma$ for $\gamma \in (-\infty,0) \cup (0,1)$, the
optimal value of the optimization problem \eqref{opt}
over all admissible strategies is
\be \label{optvaluepower}
\frac{1}{\gamma} x^{\gamma} e^{-Y_0},
\ee
and $(\tilde{c}^*, \tilde{p}^*)$ is an optimal admissible strategy if and only if
\be
\label{optstpower}
\tilde{c}^* \in \argmax_{\tilde{c} \in C}
\left(\frac{\alpha}{\gamma}\tilde{c}^\gamma e^{Y}-\tilde{c}\right)
\quad \mbox{and} \quad \tilde{p}^* \in \Pi_{P} \brak{\frac{Z + \theta}{1-\gamma}}.
\ee
In particular, an optimal admissible strategy exists, and it is unique
up to $\nu \otimes \p$-a.e. equality if the sets $C$ and $P$ are ${\cal P}$-convex.
\end{Theorem}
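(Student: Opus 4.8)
The plan is to run the $g$-martingale optimality scheme of Theorem~\ref{thmexp} in the multiplicative (CRRA) setting. For an admissible strategy $(\tilde p,\tilde c)$ with strictly positive wealth $X=X^{(\tilde p,\tilde c)}$ given by \eqref{posX}, and with $(Y,Z)$ the bounded solution of \eqref{BSDEpower} ($Z\in\Pc^{1\times n}_{\rm BMO}$), I would set
\[
R^{(\tilde p,\tilde c)}_t=\int_0^t \alpha\, e^{-\int_0^u\delta_sds}\,\frac{(\tilde c_u X_u)^\gamma}{\gamma}\,du+e^{-\int_0^t\delta_sds}\,\frac{(X_t)^\gamma}{\gamma}\,e^{-Y_t}.
\]
Since $Y_T=0$, this process agrees at time $T$ with the functional inside \eqref{opt} (here $F=0$ and $u=u_\gamma$), its value at $0$ is the constant $\tfrac1\gamma x^\gamma e^{-Y_0}$ for every strategy, and $R^{(\tilde p,\tilde c)}_t$ depends on $(\tilde p,\tilde c)$ only through their restriction to $[0,t]$. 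The bound $\E[\sup_{0\le t\le T}|R^{(\tilde p,\tilde c)}_t|^2]<\infty$ follows from the boundedness of $Y$ and $\delta$, the admissibility bounds $\E[\sup_t(X_t)^{q\gamma}]<\infty$ and $\E[(\int_0^T\tilde c_t^\gamma dt)^{q/(q-1)}]<\infty$, and Hölder's inequality.

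The central step is Itô's formula. With $dX_t/X_t=\tilde p_t(dW_t+\theta_tdt)+(r+\tilde e_t-\tilde c_t)dt$, $dY_t=-f(t,Y_t,Z_t)dt+Z_tdW_t$, and $\Lambda_t:=e^{-\int_0^t\delta_sds}(X_t)^\gamma e^{-Y_t}>0$, one obtains
\[
dR^{(\tilde p,\tilde c)}_t=-g\big(\tilde Z^{(\tilde p,\tilde c)}_t\big)\,dt+\tilde Z^{(\tilde p,\tilde c)}_t\,dW_t+A^{(\tilde p,\tilde c)}_t\,dt,\qquad \tilde Z^{(\tilde p,\tilde c)}_t=\Lambda_t\big(\tilde p_t-\tfrac{Z_t}{\gamma}\big),
\]
where the positive homogeneity $\mathbf{H2}$ of $g$ and $\Lambda_t>0$ give $g(\tilde Z^{(\tilde p,\tilde c)}_t)=\Lambda_t\,g(\tilde p_t-Z_t/\gamma)$, and
\[
A^{(\tilde p,\tilde c)}_t=\Lambda_t\Big[\tfrac{f(t,Y_t,Z_t)}{\gamma}-\big(\tfrac{1-\gamma}{2}|\tilde p_t|^2-g(\tilde p_t-\tfrac{Z_t}{\gamma})+\tilde p_t(Z_t-\theta_t)\big)+\tfrac{|Z_t|^2}{2\gamma}-\big(\tilde c_t-\tfrac{\alpha}{\gamma}\tilde c_t^\gamma e^{Y_t}\big)+r+\tilde e_t-\tfrac{\delta_t}{\gamma}\Big].
\]
By the very definition \eqref{driverpower} of $f$ as $\gamma$ times the infimum over $p\in\Pc$ and $c\in\Cc$ of exactly the bracketed quantity, the bracket is $\le0$, and since $\Lambda_t>0$ we get $A^{(\tilde p,\tilde c)}\le0$; moreover $\E[\int_0^T|\tilde Z^{(\tilde p,\tilde c)}_t|^2dt]<\infty$ by the same Hölder estimate (boundedness of $Y$, admissibility bounds, BMO property of $Z$). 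As in Theorem~\ref{thmexp}, this shows $R^{(\tilde p,\tilde c)}$ is a $g$-supermartingale, and $A^{(\tilde p,\tilde c)}\equiv0$ $\nu\otimes\p$-a.e.\ if and only if $\tilde p_t$ and $\tilde c_t$ realize the two infima for a.e.\ $(t,\omega)$, i.e.\ if and only if \eqref{optstpower} holds.

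Optimality then follows as in Section~\ref{sec:exp}: $\Ec_g[R^{(\tilde p,\tilde c)}_T]\le R_0=\tfrac1\gamma x^\gamma e^{-Y_0}$ for every admissible $(\tilde p,\tilde c)$, with equality for strategies satisfying \eqref{optstpower}, which yields \eqref{optvaluepower}; conversely, if $(\tilde p,\tilde c)$ is optimal then $\Ec_g[R^{(\tilde p,\tilde c)}_T]=R_0$, and a $g$-supermartingale whose $g$-expectation equals its constant initial value is a $g$-martingale (time consistency $\mathbf{(B3)}$ together with strict comparison for \eqref{bsde}), whence $A^{(\tilde p,\tilde c)}\equiv0$ by uniqueness of the BSDE decomposition, i.e.\ \eqref{optstpower}. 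It then remains to exhibit one admissible strategy meeting \eqref{optstpower}. A $\Pc$-measurable selector $(\tilde p^*,\tilde c^*)$ of the argmin sets $\Pi_P(\cdot)$ and $\argmax_{c\in\Cc}(\tfrac{\alpha}{\gamma}c^\gamma e^{Y}-c)$ exists by the argument of Lemma~\ref{lemma1} (sequential closedness, $\Pc$-stability and $L^0$-boundedness of the level sets, via Theorem~4.4 of \cite{cheridito11}), with $\tilde c^*\ge0$, both $\tilde c^*$ and $(\tilde c^*)^\gamma$ bounded (remark after \eqref{estimateppower}), and $\int_0^T|\tilde p^*_t|^2dt$ having moments of all orders (from \eqref{estimateppower}, \eqref{YZestimation} and the BMO property of $Z$). \textbf{The main obstacle} is to verify $\E[\sup_{0\le t\le T}(X^{(\tilde p^*,\tilde c^*)}_t)^{q\gamma}]<\infty$: writing $X^{*}_t=x\,{\cal E}(\tilde p^*\cdot W^{\q})_t\exp\big(\int_0^t(r+\tilde e_s-\tilde c^{*}_s)ds\big)$ and raising to the power $q\gamma$, the drift-integral factor is bounded while the stochastic-exponential factor must be split into a genuine martingale ${\cal E}(q\gamma\,\tilde p^*\cdot W)$ and a ``Novikov-type'' exponential $\exp\big(\mathrm{const}\int_0^T|\tilde p^*_s|^2ds\big)$; the latter is controlled by the exponential integrability of $\int_0^T|\tilde p^*_s|^2ds$ (deduced from \eqref{estimateppower}, \eqref{YZestimation} and a Fatou argument as in \eqref{novikovcondition}) together with Doob's maximal inequality, reproducing the estimates \eqref{maxestimate}--\eqref{novikovcondition} of the exponential case.

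Finally, when $C$ and $P$ are $\Pc$-convex, any two optimal admissible strategies coincide $\nu\otimes\p$-a.e., which is obtained as in \cite{cheridito10} from the $\Pc$-convexity of the constraint sets together with the strict concavity of $u_\gamma$, forcing the zero-drift condition $A^{(\tilde p^*,\tilde c^*)}\equiv0$ to pin down $(\tilde p^*,\tilde c^*)$ uniquely $\nu\otimes\p$-a.e.
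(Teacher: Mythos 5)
Your proposal follows essentially the same route as the paper's proof: the same utility process $R^{(\tilde p,\tilde c)}_t=\int_0^t \alpha e^{-\int_0^u\delta_s ds}\tfrac{1}{\gamma}(\tilde c_u X_u)^\gamma du+e^{-\int_0^t\delta_s ds}\tfrac{1}{\gamma}X_t^\gamma e^{-Y_t}$, the same It\^o decomposition with drift $-g(\tilde Z^{(\tilde p,\tilde c)})+A^{(\tilde p,\tilde c)}$ where $A^{(\tilde p,\tilde c)}\le 0$ by the definition of the driver \eqref{driverpower}, the same verification of admissibility of the argmin selector via the splitting of $(X^*)^{q\gamma}$ into a stochastic exponential plus a Novikov-type factor controlled by \eqref{estimateppower}, \eqref{YZestimation} and Doob's inequality. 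Your explicit remark that the positive homogeneity $\mathbf{H2}$ is what allows $g(\tilde Z^{(\tilde p,\tilde c)}_t)=\Lambda_t\, g(\tilde p_t-Z_t/\gamma)$ is a point the paper uses only implicitly, but the argument is the same.
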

\begin{proof}

For every admissible strategy $(\tilde{c},\tilde{p})$ define the process
$$
R^{(p,c)}_t = \int_0^t \alpha e^{-\int_0^s \delta_udu} \dfrac{1}{\gamma}(c_sX^{(c,p)}_s)^{\gamma} ds + e^{-\int_0^t \delta_sds}\dfrac{1}{\gamma}  \brak{X^{(c,p)}_t}^{\gamma}e^{-Y_t}.
$$
Then
\begin{itemize}
\item For all $(p,c) \in \Acr \times \Ccr, R_T^{(p,c)} =\int_0^T \alpha e^{-\int_0^t \delta_sds} u(c_tX^{(c,p)}_t) dt + e^{-\int_0^T \delta_sds} u \brak{X^{(c,p)}_T}$,
\item $\forall t\in [0,T]$,  $\forall (p,c), (p',c') \in \Acr \times \Ccr $ such that $(p,c)=(p',c')$  on $[0,t]$, we have  $R_t^{(p,c)}=R_t^{(p',c')}$ $\P$.a.s
\end{itemize}
Using Ito formula, we obtain
$$
dR^{(p,c)}_t =  (-g(Z_t^{(p,c)})+A_t^{p,c})dt +Z_t^{(p,c)}dW_t,
$$
where
\beas
Z_t^{(p,c)}=e^{-\int_0^t \delta_udu} \brak{X_t^{(c,p)}}^\gamma e^{-Y_t}(p_t-\frac{1}{\gamma}Z_t),
\eeas
and
\beas
A^{(p,c)}_t = &&e^{-\int_0^t \delta_udu}\brak{X_t^{(c,p)}}^\gamma e^{-Y_t} \Big[ g(p_t-\frac{1}{\gamma}Z_t)+p_t (\theta_t-Z_t)+\frac{1}{2}(\gamma-1) |p_t|^2 +\frac{1}{2\gamma}|Z_t|^2+\frac{1}{\gamma} f(t,Y_t,Z_t)\\
&& + \frac{\alpha}{\gamma}c_t^\gamma e^{Y_t}
+ e_t - \tilde{c}_t + r - \frac{\delta_t}{\gamma}\Big].
\eeas
\beas
\E[\dint_0^T |Z_t^{(p,c)}|^2dt]&=\E[\dint_0^T \brak{X_t^{(c,p)}}^{2\gamma} e^{-2Y_t}|p_t+\frac{1}{\gamma}Z_t|^2dt] \\
& \leq K \E[\sup\limits_{0 \leq t \leq T} \brak{X_t^{(p,c)}}^{2\gamma} \dint_0^T |p_t+\frac{1}{\gamma}Z_t|^2dt]\\
& \leq K \E[\sup\limits_{0 \leq t \leq T} \brak{X_t^{(p,c)}}^{q\gamma}]^{\frac{2}{q}} \E[\dint_0^T |p_t+\frac{1}{\gamma}Z_t|^2dt)^{\frac{q}{q-2}}]^{\frac{q-2}{q}}<\infty.
\eeas
\beas
\E[(R^{(p,c)}_T)^2] & \leq C \E[\sup\limits_{0 \leq t \leq T} \brak{X_t^{(c,p)}}^{\gamma}(1+\int_0^Tc_t^\gamma)] \\
& C \E[\sup\limits_{0 \leq t \leq T} \brak{X_t^{(c,p)}}^{q\gamma}]^\frac{1}{q}\brak{1+[\E\brak{\dint_0^Tc_t^\gamma}^{\frac{q}{q-1}}]^{\frac{q-1}{q}}}\leq \infty.
\eeas
So, for all admissible strategy $(p,c),$ we have $\int_0^T A^{(p,c)}_t dt \in L^2(\Omega, \Fc,\p).$
By definition of the function $f$, the process $A_t^{(p,c)}$ is negative a.e, and so forall admissible strategy $R^{(p,c)}$ is $g$-supermartingale.
\\On the other hand, for $p^* \in \argmin\limits_{p\in \Pc}\brak{\frac{\gamma-1}{2}|p_t|^2-g(p-\frac{Z}{\gamma})+p(Z-\theta)}$ and $c^* \in \argmin\limits_{c\in\Cc}\brak{c-\frac{\alpha}{\gamma}c^\gamma e^{y}}$, we have $A^{(p^*,c^*)}_t=0\quad dt \otimes d\p.a.e$ and so $R^{(p^*,c^*)}$ is $g$-martingale.
It remains to be verified that $(p^*,c^*)$ is an admissible strategy. Note that $(c^*)^{\gamma}$ is a continuous bounded process, then $\E[(\dint_0^T (c^*_t)^\gamma dt)^{\frac{q}{q-1}}] <\infty.$ Using estimation(\ref{estimateppower}), we deduce that  $\E[(\dint_0^T |p^*_t|^2 dt)^{n}] <\infty$ forall $n\geq 1.$
To complete the proof,we have  for all $q > 2$, 
\beas
 (X^{(p^*,c^*)}_t)^{\gamma}&& = x^{\gamma}e^{\displaystyle{\gamma \int_0^t p_s^*(dW_s+\theta_sds)-\frac{\gamma}{2} \int_0^t |p^*_s|^2ds + \gamma \int_0^t(r+e_s-c^*_s)ds }}\\
&& \leq d_1 e^{\displaystyle{\gamma \int_0^t p_s^*dW_s+\frac{\gamma(\gamma-1)}{2}\int_0^t|p_s^*|^2ds}}\\
&& \leq d_1 e^{\displaystyle{\gamma \int_0^t p_s^*dW_s-\gamma^2 \int_0^t|p_s^*|^2ds}} \times e^{\displaystyle{\frac{\gamma(3\gamma-1)}{2}\int_0^t|p_s^*|^2ds}}\\
&& \leq d_2 \brak{e^{\displaystyle{2\gamma \int_0^t p_s^*dW_s-2\gamma^2 \int_0^t|p_s^*|^2ds}} + e^{\displaystyle{\gamma(3\gamma-1)\int_0^t|p_s^*|^2ds}}}
\eeas
and so 
\begin{equation}\label{maxestimatep}
\sup\limits_{0\leq t \leq T} (X^{(p^*,c^*)}_t)^{\gamma} \leq d_2 \brak{\sup\limits_{0\leq t \leq T}\Ec(  \int_0^t 2\gamma p^*_sdW_s) +\sup\limits_{0\leq t \leq T} e^{\displaystyle{\gamma(3\gamma-1)\int_0^t|p_s^*|^2ds}}}\;\;\; \p.a.s
\end{equation}
The continuous process $\brak{  \dint_0^t 2\gamma p^*_sdW_s}_{0\leq t \leq T}$  is a BMO martingale then the process $M_.=\brak{\Ec(\dint_0^. 2\gamma p^*_sdW_s)}_{0\leq t \leq T}$ is uniformly integrable martingale.  By Doob’s martingale maximal inequalities, we have, forall $q>1$
$$\E[(\sup\limits_{0\leq t \leq T}|M_t|)^q] \leq (\dfrac{q}{q-1})^q\E(|M_T|^q).$$
\beas
 (M_T)^{q}&& = e^{\displaystyle{2q\gamma \int_0^T p_s^*dW_s-2q\gamma^2 \int_0^T|p_s^*|^2ds}} \\
&&= e^{\displaystyle{2q\gamma \int_0^T p_s^*dW_s-4q^2\gamma^2 \int_0^T|p_s^*|^2ds}}\times e^{\displaystyle{2q(2q-1)\gamma^2 \int_0^T|p_s^*|^2ds}} \\
&& \leq \dfrac{1}{2}\brak{e^{\displaystyle{4q\gamma \int_0^T p_s^*dW_s-8q^2\gamma^2 \int_0^T|p_s^*|^2ds}}+ e^{\displaystyle{4q(2q-1)\gamma^2 \int_0^T|p_s^*|^2ds}}}
\eeas
Then, $\E(|M_T|^q) < \infty.$
Using Lemma\ref{lemma1} and estimation \ref{YZestimation}, there a positive constant $M_1$ such that  for all  
$$\forall k \in \mathbb{N} \setminus \{0\},\;\; \mathbb{E}\left[\left(q\int_0^T\left|p^*_s\right|^2 d s\right)^{k}\right]<M_1q^k.$$
Which implies that
$$\forall n \in \mathbb{N} \setminus \{0\},\;\; \mathbb{E}\left[\sum\limits_{k=1}^n\frac{1}{k!}\left(q\int_0^T\left|p^*_s\right|^2 d s\right)^{k}\right]<{M_1}(e^q-1).$$
By Fatou lemma we deduce that 
$$\E[e^{q\int_0^T |p^*|^2_s  ds}]< M_1(e^q-1)+1< \infty.$$
Using convex inequality in \ref{maxestimatep}, we obtain 
\begin{equation}
\sup\limits_{0\leq t \leq T} e^{-\gamma p  h(t) X_t} \leq d_3 2^{p-1} (\sup\limits_{0\leq t \leq T}\Ec(-2 p \gamma  \int_0^t h(s)p^*_sdW_s) + e^{ p \int_0^T |p^*|^2_s  ds})\;\;\; \p.a.s.
\end{equation}
Which means that $\E[\sup\limits_{0\leq t \leq T} e^{-\gamma p  h(t) X_t}] < \infty.$
\end{proof}
\begin{Remark}
If there is no constraint on $c$ ($\Cc= \Pc^{1\times 1}$), then the infimum $\inf\limits_{c\in\Cc} \brak{c-\frac{\alpha}{\gamma}c^\gamma e^{y}}$ is equal $(1-\dfrac{1}{\gamma})\alpha^{\frac{1}{1-\gamma}}e^{\frac{1}{1-\gamma}y}$ and it is rached on $\underline{c}=\alpha^{\frac{1}{1-\gamma}}e^{\frac{1}{1-\gamma}y}$. So, the optimal consumption strategy is given by $$c^*=\alpha^{\frac{1}{1-\gamma}}e^{\frac{1}{1-\gamma}Y},$$
where $(Y,Z)$ is solution of BSDE (\ref{BSDEpower}).
\end{Remark}
\begin{Example}
In the case of linear generator $g(t,z)=\eta_t.z$, the driver $f$ is given by
$$
f(t,y,z)
= \gamma \brak{\frac{1-\gamma}{2}{\rm dist_t^2}(p, \frac{\theta+\eta-z}{1-\gamma})+\frac{\eta_t}{\gamma}.z- \frac{1}{2(\gamma-1)}|\theta+\eta-z|^2
  - \frac{1}{2\gamma}|z|^2 +\inf\limits_{c\in\Cc} \brak{c-\frac{\alpha}{\gamma}c^\gamma e^{y}} - r - e_t + \frac{\delta_t}{\gamma}}.
$$
So, by setting $\eta\equiv 0$, we find the results of Cheridito et all \cite{cheridito11}.
If there is no constraint, we have ${\rm dist_t^2}(p, \frac{\theta+\eta-z}{1-\gamma})=0$ and this is achieved on $p= \frac{\theta+\eta-z}{1-\gamma}.$ So $p^*=\frac{\theta+\eta-Z}{1-\gamma}$ where $(Y,Z)$ is solution of BSDE (\ref{BSDEpower}).

\end{Example}
\begin{Example}{The $\kappa$- ignorance utility}\\
We consider the one-dimensional context, and we assume there is no constraint on trading strategy ($\Qc=\Pc$). For all $(t,\omega)$, we consider the functionnal
$$\ell: x \longmapsto \frac{1-\gamma}{2}|x|^2 - \kappa |x-\dfrac{z}{\gamma}| + p(z-\theta_t(\omega)).$$
If $z \neq \gamma \theta(\omega)$, $\ell$ admits a global minimum given by 
 $\dfrac{z-\theta_t (\omega)-\kappa}{\gamma-1}\textbf{1}_{ z < \gamma \theta_t(\omega)}+\dfrac{z-\theta(\omega)+\kappa}{\gamma-1}\textbf{1}_{ z > \gamma \theta_t(\omega)}$ if $0 < \gamma < 1; $ and 
$ \dfrac{z-\theta_t(\omega)-\kappa}{\gamma-1}\textbf{1}_{ z > \gamma \theta_t(\omega)}+\dfrac{z-\theta_t(\omega)+\kappa}{\gamma-1}\textbf{1}_{ z < \gamma \theta_t(\omega)} $ if $ \gamma < 0$. In the case where $z = \gamma \theta_t(\omega),$ $\ell$ admits two global minimums given by $\dfrac{z-\theta_t(\omega)\pm\kappa}{\gamma-1}.$ So an optimal strategy can be written as follows,
\begin{equation*}
\begin{array}{lll}
p^* =\dfrac{1}{\gamma-1} (Z-\theta- \kappa\textbf{1}_{ Z < \gamma \theta}+\kappa\textbf{1}_{ Z > \gamma \theta}+ \zeta \textbf{1}_{ Z = \gamma \theta} & {\rm if} & 0 < \gamma < 1; \\
p^* =\dfrac{1}{\gamma-1} (Z-\theta- \kappa\textbf{1}_{ Z > \gamma \theta}+\kappa\textbf{1}_{ Z < \gamma \theta}+ \zeta \textbf{1}_{ Z = \gamma \theta} & {\rm if} & \gamma < 0.
\end{array}
\end{equation*}
where $\zeta=(\zeta_t)_{t\in [0,T]}$ is a real predictable process taking value in $\{-\kappa, \kappa\}.$ Note that we do not necessarily have the uniqueness of the optimal strategy.

\end{Example}

\section{Logarithmic utility}\label{sec:log}
\label{subsec:log}

In the case $u(x) = \log(x)$, the wealth process is the same as in the previous section namely
\be
X^{(c,p)}_t =
x \, {\cal E} \brak{\tilde{p} \cdot W^{\q}}_t \exp\brak{\int_0^t( r
+ \tilde{e}_s - \tilde{c}_s) ds} > 0,
\ee
where the trading strategies $(\tilde{\pi}, \tilde{c})$ are constrained to take values in a sequentially closed and ${\cal P}$-stable set  $\Qc \times \Cc  \subseteq {\cal P}^{1 \times m}\times {\cal P}$. 
We make the following assumption:
\be \label{asslog}
\mbox{there exists a pair }
(\bar{p},\bar{c}) \in \Qc \times \Cc \mbox{ such that } \bar{p} \mbox{ and }
u_{\gamma}(\bar{c}) - \bar{c}   \mbox{ are bounded.}
\ee
\begin{Definition}
For $u(x) = \log(x)$, an admissible strategy is a pair $(c,p) \in C \times P$
satisfying
\be \label{condlog}
\mathbb E\left[\int_0^T|\log( c_t)|^2 dt+\int_0^T c_t dt+\int_0^T
|p_t|^2 dt\right] < \infty.
\ee
\end{Definition}
Remember that we understand $\log(x)$ to be $-\infty$ for $x \le 0$. Therefore,
\eqref{condlog} implies $c > 0$ $\nu \otimes \p$-a.e.
Note that this implies that $u_{\gamma}(\bar{c})$ and $\bar{c}$ are both bounded. 
\begin{Lemma}\label{lemma2}
\begin{enumerate}
\item
$\forall Z\in \Qc; \;\; \argmin\limits_{p\in \Qc}\brak{\frac{1}{2} |p_t|^2 -p_t\theta_t- g(p_t-Z)}$ is non empty subset from $\Qc$.
 Moreover, there are two constants $\kappa_3$ and $\kappa_4$ such that  $\forall  Z\in \Qc, \forall  p\in \argmin\limits_{p\in \Qc}\brak{\frac{1}{2} |p_t|^2 -p_t\theta_t- g(p_t-Z)}$ we have  
\begin{equation}\label{plogestimate}
|p|\leq \kappa_3|Z|+\kappa_4\;\; dt\otimes d\p. a.e .
\end{equation}
\item $\argmin\limits_{c\in C} \left(c-\frac{\alpha}{h}\ln(c)\right)$ is non empty subset from $\Cc$
and $\min\limits_{c\in C} \left(c-\frac{\alpha}{h}\ln(c)\right)$ is bounded.
\end{enumerate}
\end{Lemma}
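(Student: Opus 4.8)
The plan is to mirror the structure of Lemma~\ref{lemma1}, treating each of the two infimum problems separately but with the same conditional‑analysis machinery of Cheridito et al.\ \cite{cheridito11}.

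\textbf{Part (1).} For fixed $Z\in\Qc$ define $f_Z:\Qc\to\bar L$ by $f_Z(p)=\tfrac12|p|^2-p\theta-g(p-Z)$. As in Lemma~\ref{lemma1}, $f_Z$ is sequentially continuous and $\Pc$-stable (since the integrand is a deterministic function of the pointwise values of $p$, $Z$, $\theta$, and $g$ is $\Pc$-stable). By Theorem~4.4 in \cite{cheridito11}, to get existence of a minimizer it suffices to check that the sublevel set $\Gamma=\{p\in\Qc:\ f_Z(p)\le f_Z(\bar p)\}$ is $L^0$-bounded, where $\bar p$ is the bounded strategy provided by \eqref{asslog}. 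For $p\in\Gamma$, expand $f_Z(p)\le f_Z(\bar p)$, then bound $|g(p-Z)|\le\mu(|p-Z|+1)\le\mu(|p|+|Z|+1)$ using (A1)--(A2) and $|p\theta|\le\kappa|p|$ using the bound on $\theta$. This yields
$$
\tfrac12|p|^2-(\kappa+\mu)|p|-\mu(|Z|+1)\le f_Z(\bar p)\le \tfrac12|\bar p|^2+(\kappa+\mu)|\bar p|+\mu(|\bar p|+|Z|+1),
$$
and completing the square on the left shows $|p|^2\le a|\bar p|^2+b|Z|+c$ for constants $a,b,c$ independent of $Z$ and $p$. Since $\bar p$ is bounded, this is exactly the $L^0$-boundedness of $\Gamma$, and absorbing $b|Z|\le |Z|^2+\tfrac{b^2}{4}$ gives \eqref{plogestimate} with suitable $\kappa_3,\kappa_4$. (One must be a little careful because the linear-in-$|Z|$ term, not quadratic, appears here, so the estimate comes out in the form $|p|\le\kappa_3|Z|+\kappa_4$ rather than squared; this is consistent with the statement.)

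\textbf{Part (2).} Here the relevant function is $c\mapsto c-\tfrac{\alpha}{h}\ln(c)$ on $\Cc$, with $h$ the bounded strictly positive function from Section~\ref{sec:exp} (so $\alpha/h$ is bounded above and below away from $0$). This map is $\Pc$-stable and sequentially lower semicontinuous on the predictable processes, and again by \eqref{asslog} there is a bounded $\bar c$ with $u_\gamma(\bar c)-\bar c$—hence $\bar c$ and $\ln\bar c$—bounded, so $\bar c-\tfrac{\alpha}{h}\ln\bar c$ is bounded and serves as a finite comparison value. The pointwise minimizer of $x\mapsto x-\tfrac{\alpha}{h}\ln x$ over $x>0$ is $x=\alpha/h$, which is bounded and bounded away from $0$; consequently the sublevel set $\{c\in\Cc:\ c-\tfrac{\alpha}{h}\ln c\le \bar c-\tfrac{\alpha}{h}\ln\bar c\}$ is $L^0$-bounded above and, because $x-\tfrac{\alpha}{h}\ln x\to+\infty$ as $x\downarrow0$ uniformly in the bounded range of $\alpha/h$, bounded away from $0$ as well. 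Applying Theorem~4.4 of \cite{cheridito11} gives a minimizer $c^*\in\Cc$. For the boundedness of $\min_{c\in C}(c-\tfrac{\alpha}{h}\ln c)$: it is $\le \bar c-\tfrac{\alpha}{h}\ln\bar c$, which is bounded; and it is $\ge\inf_{x>0}(x-\tfrac{\alpha}{h}\ln x)=\tfrac{\alpha}{h}(1-\ln\tfrac{\alpha}{h})$, again bounded since $\alpha/h$ is bounded above and away from $0$. Hence the min is a bounded process.

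\textbf{Main obstacle.} The only genuinely delicate point is verifying the hypotheses of Theorem~4.4 of \cite{cheridito11} in the conditional/$L^0$ framework—specifically that the functionals are $\Pc$-stable and sequentially continuous (resp.\ l.s.c.) and that $L^0$-boundedness of the sublevel set holds \emph{uniformly} so that the constants $\kappa_3,\kappa_4$ do not depend on $Z$. Once the completing-the-square estimate is written with care to track that the comparison value $f_Z(\bar p)$ contributes only a $|Z|$ (not $|Z|^2$) term and that all other constants come from the fixed bounds on $\theta$, $\mu$, $h$, and $\bar p$, the rest is routine and parallels Lemma~\ref{lemma1} verbatim.
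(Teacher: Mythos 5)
Your proposal is correct and follows essentially the same route as the paper's proof: for part (1) you apply Theorem 4.4 of Cheridito et al.\ to the sublevel set $\Gamma=\{p:f_Z(p)\le f_Z(\bar p)\}$, whose $L^0$-boundedness you derive from the growth bound $|g(p-Z)|\le\mu(|p-Z|+1)$ and the bound $\kappa$ on $\theta$, and the completing-the-square step yields the same linear-in-$|Z|$ estimate with constants independent of $Z$; for part (2) you use the same comparison with $\bar c$ and the pointwise minimizer $\alpha/h$ to sandwich the minimum (your lower bound $\frac{\alpha}{h}(1-\ln\frac{\alpha}{h})$ is in fact the correct sign, where the paper writes $\frac{\alpha}{h}(\ln\frac{\alpha}{h}-1)$). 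No gaps; your treatment of part (2) is if anything slightly more detailed than the paper's.
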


\begin{proof}
\begin{enumerate}
\item

Let $f_Z:\Ac \longrightarrow \bar{L}; p\mapsto \frac{1}{2} |p|^2 -p\theta- g(p-Z).$
\\$f_Z$ is sequentially continuous stable and the set 
$$\Gamma:=\{p\in \Qc\;\;s.t\;\; f(p)\leq f(\bar{p})\}$$
is $L^0-$bounded, indeed, let $p \in \Gamma$ we have
$$
\begin{aligned}
\frac{1}{2} |p|^2 -p\theta- g(p-Z) \leq \frac{1}{2} |\bar{p}|^2 -\bar{p}\theta- g(\bar{p}-Z)
\end{aligned}
$$
Using Cauchy Schwartz inequality and assumption on $g$, we obtain
$$
\begin{aligned}
\frac{1}{2}|p|^2- |p||\theta| - \mu(|p-Z|+1) \leq \frac{1}{2}|\bar{p}|^2+|\bar{p}||\theta| + \mu(|\bar{p}-Z|+1)
\end{aligned}
$$
this implies
$$
\begin{aligned}
\frac{1}{2}|p|^2- (\mu+\kappa)|p| \leq   \frac{1}{2}|\bar{p}|^2+(\mu+\kappa)|\bar{p}| +2\mu|Z|+ 2\mu
\end{aligned}
$$
and so,

\begin{equation}\label{estimatep3}
|p|\leq  \frac{1}{2}|p|^2- (\mu+\kappa)|p| + \frac{1}{2}(\mu+\kappa+1)^2  \leq   2\mu|Z|+\frac{1}{2}|\bar{p}|^2+(\mu+\kappa)|\bar{p}| + 2\mu+\frac{1}{2}(\mu+\kappa+1)^2.
\end{equation}

Hence $\Gamma$ is $L^0-$bounded. Consequently, using Theorem 4.4 in \cite{cheridito11}, there exists at least $\hat{p} \in \Qc$ such that $\essinf\limits_{p\in \Qc}f_Z(p)=f_Z(\hat{p}).$.
\\
Estimate \ref{plogestimate} is an immediate consequence of the inequality \ref{estimatep3} and  the fact that  $\bar{p}$ are bounded.
\item The map $c \mapsto c-\frac{\alpha}{h}\ln(c)$ is is sequentially continuous stable and the set 
$$\{c \in \Cc \;\; s.t \;\; c-\frac{\alpha}{h}\ln(c) \leq \bar{c}-\frac{\alpha}{h}\ln(\bar{c})\}$$
is is $L^0-$bounded. Moreover
$$\frac{\alpha}{h}(\ln(\frac{\alpha}{h})-1)\leq \min\limits_{c\in C} \left(c-\frac{\alpha}{h}\ln(c)\right)\leq \left(\bar{c}-\frac{\alpha}{h}\ln(\bar{c})\right)$$
\end{enumerate}
\end{proof}
Consider now  the BSDE
\be \label{BSDElog}
Y_t =  \int_t^T f(s,Y_s,Z_s)ds - \int_t^T Z_s dW_s
\ee
with driver
\begin{equation}\label{driverlog}
f(t,y,z) = \dfrac{-\alpha y}{h(t)}+ \essinf\limits_{p\in P}\brak{\frac{1}{2} |p|^2 -p\theta- g(p-z)}_t + \essinf\limits_{c\in C}\left(c-\frac{\alpha}{h} \log(c)\right)_t
 -e_t - r ,
\end{equation} 
where $$
h(t) = \alpha \int_t^Te^{-\int_t^s \delta_udu}ds+ e^{-\int_t^T \delta_udu}
$$ 
is the  unique solution on the Cauchy problem 
$$h'(t)=\delta_t h(t)-\alpha \quad \mbox{with} \quad h(T)=1.$$
Note  that $f(t,y,z)$ is of linear growth in $y$, and quadratic in $z$. Since $\theta$, $e$, $E$ and $h$ are bounded, using Lemma \ref{lemma2}, there exists a positive constant $K $ such that
$$|f(t,y,z)| \le K(1+ |y| + |z|^2)$$
and
$$|f(t,y_1,z_1)-f(t,y_2,z_2)|\le K( |y_1-y_2|+(1+|z_1|+|z_2|)|z_1-z_2|).$$
So it follows from Kobylanski \cite{Kobylanski} that equation \eqref{BSDEexp} has a unique solution
$(Y,Z)$ such that $Y$ is bounded and from  Morlais \cite{Morlais} that $Z$
belongs to ${\cal P}^{1 \times n}_{\rm BMO}$.
According Briand and Hu \cite{BH08}, $(Y,Z)$ satisfies for each $p > 1$
\begin{equation}\label{YZestimationl}
\mathbb{E}\left[\exp \left(\gamma p \sup _{0 \leq t \leq T}\left|Y_t\right|\right)+\left(\int_0^T\left|Z_s\right|^2 d s\right)^{p / 2}\right] \leq C \mathbb{E}\left[\exp \left(2p K\left(|F|+KT\right)\right)\right]<M. 
\end{equation}
where $M$ depends on $p, K , \|h\|_{\infty}, \|F\|_{\infty}$ and $T$.
\begin{Theorem} \label{thmlog}
For $u(x) = \log(x)$, the optimal value of the optimization problem
\eqref{opt} over all admissible strategies is
\be \label{optvaluelog}
h(0)(\log(x)-Y_0),
\ee
and $(\tilde{c}^*, \tilde{p}^*)$ is an optimal admissible strategy if and only if
\be
\label{optstlog}
\tilde{c}^* \in \argmax_{\tilde{c}\in C} \left(\frac{\alpha}{h} \log(\tilde{c})-\tilde{c}\right)
\quad \mbox{and} \quad \tilde{p}^* \in \Pi_{P} \brak{\theta}.
\ee
In particular, an optimal admissible  strategy exists, and it is unique
up to $\nu \otimes \p$-a.e. equality if the sets $C$ and $P$ are ${\cal P}$-convex.
\end{Theorem}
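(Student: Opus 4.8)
The plan is to reproduce the $g$-martingale argument used for Theorems \ref{thmexp} and \ref{thmpower}. To an admissible pair $(\tilde c,\tilde p)$ I associate the continuous process
$$
R^{(p,c)}_t = \int_0^t \alpha e^{-\int_0^s\delta_u du}\log\brak{\tilde c_s X^{(c,p)}_s}\,ds + e^{-\int_0^t\delta_s ds}h(t)\brak{\log X^{(c,p)}_t - Y_t},
$$
where $(Y,Z)$ is the bounded solution of \eqref{BSDElog}. First I would check the three structural properties that drive the method: since $h(T)=1$ and $Y_T=0$, one has $R^{(p,c)}_T = \int_0^T \alpha e^{-\int_0^t\delta_s ds}\log(\tilde c_t X^{(c,p)}_t)\,dt + e^{-\int_0^T\delta_s ds}\log X^{(c,p)}_T$, which is exactly the objective \eqref{opt} with $u=\log$; $R^{(p,c)}_0 = h(0)(\log x - Y_0)$ is a constant independent of the strategy, equal to the announced value \eqref{optvaluelog}; and $R^{(p,c)}_t$ depends on $(\tilde c,\tilde p)$ only through its restriction to $[0,t]$. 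Using $\log X^{(c,p)}_t = \log x + \int_0^t\tilde p_s dW_s + \int_0^t(\tilde p_s\theta_s + r + \tilde e_s - \tilde c_s - \tfrac12|\tilde p_s|^2)\,ds$ together with the boundedness of $h,\theta,\delta$ and the admissibility conditions \eqref{condlog}, I would verify that $R^{(p,c)}_T\in L^2(\Fc_T)$ and $\E[\sup_{0\le t\le T}(R^{(p,c)}_t)^2]<\infty$, so that $\Ec_g[R^{(p,c)}_T]$ is well defined.

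Next, applying It\^o's formula to $R^{(p,c)}$, inserting the dynamics of $\log X^{(c,p)}$ and of $Y$, and using the linear ODE $h'=\delta h-\alpha$ to cancel the terms carrying $\log X^{(c,p)}_t$ and $Y_t$ (and using that the bounded income- and rate-terms cancel against the corresponding terms in \eqref{driverlog}), I expect the drift to collapse to
$$
dR^{(p,c)}_t = \brak{-g(t,Z^{(p,c)}_t) + A^{(p,c)}_t}dt + Z^{(p,c)}_t dW_t,\qquad Z^{(p,c)}_t = e^{-\int_0^t\delta_s ds}h(t)(\tilde p_t - Z_t),
$$
where the passage from the $g(t,\tilde p_t - Z_t)$ produced by the driver to $g(t,Z^{(p,c)}_t)$ uses the positive homogeneity $\mathbf{H2}$ (as $e^{-\int_0^t\delta_s ds}h(t)>0$), and
$$
A^{(p,c)}_t = e^{-\int_0^t\delta_s ds}h(t)\Big\{\edg{\tilde p_t\theta_t - \tfrac12|\tilde p_t|^2 + g(t,\tilde p_t - Z_t)} + \essinf_{p\in P}\brak{\tfrac12|p|^2 - p\theta - g(t,p-Z_t)} + \edg{\tfrac{\alpha}{h(t)}\log\tilde c_t - \tilde c_t} + \essinf_{c\in C}\brak{c - \tfrac{\alpha}{h}\log c}\Big\}.
$$
Since $\tilde p_t\in P$ and $\tilde c_t\in C$, each bracket is minus a member of the family over which the corresponding essential infimum is taken, so $A^{(p,c)}_t\le 0$ $dt\otimes d\p$-a.e. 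Because $|g(t,Z^{(p,c)}_t)|\le\mu(|Z^{(p,c)}_t|+1)$ and $\E\int_0^T|Z^{(p,c)}_t|^2 dt<\infty$ (from \eqref{condlog}, boundedness of $h$, and $Z\in{\cal P}^{1\times n}_{\rm BMO}$), both $\int_0^\cdot Z^{(p,c)}_s dW_s$ and $\int_0^\cdot g(s,Z^{(p,c)}_s)\,ds$ are square-integrable, so $R^{(p,c)}$ is a $g$-supermartingale and $\Ec_g[R^{(p,c)}_T]\le R^{(p,c)}_0 = h(0)(\log x - Y_0)$.

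To exhibit an optimal strategy I take $\tilde p^*_t\in\argmin_{p\in P}(\tfrac12|p|^2 - p\theta - g(t,p-Z_t))$ and $\tilde c^*_t\in\argmin_{c\in C}(c-\tfrac{\alpha}{h}\log c)=\argmax_{c\in C}(\tfrac{\alpha}{h}\log c-c)$, i.e. the choice \eqref{optstlog} (here $\Pi_P$ denotes this pointwise minimizer, which is the Euclidean projection of $\theta$ onto $P$ when $g\equiv 0$). Both sets are non-empty by Lemma \ref{lemma2}, and that lemma (together with the coercivity of $c\mapsto c-\tfrac\alpha h\log c$) gives the estimate \eqref{plogestimate}, $|\tilde p^*|\le\kappa_3|Z|+\kappa_4$, and the boundedness of $\tilde c^*$ and $\log\tilde c^*$; hence, using $Z\in{\cal P}^{1\times n}_{\rm BMO}$, the pair $(\tilde c^*,\tilde p^*)$ satisfies \eqref{condlog} and is admissible. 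For this choice $A^{(p^*,c^*)}_t=0$ $dt\otimes d\p$-a.e., so $R^{(p^*,c^*)}$ is a $g$-martingale and $\Ec_g[R^{(p^*,c^*)}_T]=h(0)(\log x - Y_0)$, proving \eqref{optvaluelog} and optimality of $(\tilde c^*,\tilde p^*)$. Conversely, if $(\tilde c,\tilde p)$ is optimal then $\Ec_g[R^{(p,c)}_T]=h(0)(\log x - Y_0)=R^{(p,c)}_0$, while $R^{(p,c)}$ solves a BSDE with driver $g(t,z)-A^{(p,c)}_t$, $A^{(p,c)}\le 0$; comparing with the $g$-martingale $\Ec_g[R^{(p,c)}_T\mid\Fc_\cdot]$, the strict comparison theorem for BSDEs with a generator satisfying $\mathbf{A1}$–$\mathbf{A2}$ forces $A^{(p,c)}_t=0$ a.e., and since both brackets above are $\le 0$ this yields exactly \eqref{optstlog}. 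Uniqueness up to $\nu\otimes\p$-a.e. equality when $C$ and $P$ are ${\cal P}$-convex then follows as in Theorem \ref{thmpower} and \cite{cheridito11}, from the strict convexity of $c\mapsto c-\tfrac\alpha h\log c$ on $(0,\infty)$ and of the quadratic part of $p\mapsto\tfrac12|p|^2-p\theta-g(t,p-Z_t)$.

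The main structural simplification compared with the exponential and power cases is that $\log X^{(c,p)}_t$ enters $R^{(p,c)}$ \emph{linearly} rather than through an exponential, so no exponential-moment estimates analogous to \eqref{maxestimatep} are required; the $g$-supermartingale property follows directly from the linear-in-$|\tilde p|^2$ admissibility condition \eqref{condlog} and from $Z\in{\cal P}^{1\times n}_{\rm BMO}$. The two points needing care are therefore (i) the It\^o computation itself, where the $\log X^{(c,p)}_t$, $Y_t$, income and interest terms must cancel exactly via $h'=\delta h-\alpha$ and $\mathbf{H2}$ so that the drift reduces to $-g(t,Z^{(p,c)}_t)+A^{(p,c)}_t$ with $A^{(p,c)}$ of the advertised sign; and (ii) the ``only if'' and uniqueness statements, which rest on the strict comparison theorem for $g$-martingales and on ${\cal P}$-convexity rendering the pointwise minimizers unique — the latter being the genuinely delicate point for a general Lipschitz generator $g$.
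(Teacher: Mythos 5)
Your proposal follows essentially the same route as the paper: the same utility process $R^{(p,c)}$, the same It\^o/drift decomposition $dR^{(p,c)}_t=(-g(Z^{(p,c)}_t)+A^{(p,c)}_t)dt+Z^{(p,c)}_tdW_t$ with $A^{(p,c)}\le 0$ by definition of the driver \eqref{driverlog} (and $\mathbf{H2}$ for the homogeneity step), and the same use of Lemma \ref{lemma2} to obtain admissibility of the minimizing pair. You are in fact more complete than the paper, which stops after checking admissibility of $(p^*,c^*)$ and does not write out the ``only if'' or uniqueness parts; the one point to be careful about is your attribution of uniqueness to strict convexity of $p\mapsto\tfrac12|p|^2-p\theta-g(t,p-Z_t)$, which need not hold for a merely Lipschitz $g$ --- the paper's own $\kappa$-ignorance example exhibits two minimizers on the set $\{Z=\theta\}$, so uniqueness genuinely requires more than ${\cal P}$-convexity of $P$ in general.
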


\begin{proof}
For every admissible strategy $(\tilde{c},\tilde{p})$, define the process
$$
R^{p,c}_t = h(t) e^{- \int_0^t \delta_u du } \brak{\log \brak{X_t^{(p,c)}}-Y_t}+\int_{0}^t \alpha e^{- \int_0^s \delta_u du } \log(c_sX_s^{p,c})ds
$$
Then
\begin{itemize}
\item For all $(p,c) \in \Acr \times \Ccr, R_T^{(p,c)} = e^{- \int_0^T \delta_u du } \brak{\log \brak{X_T^{(p,c)}}}+\int_{0}^T \alpha e^{- \int_0^s \delta_u du } \log(c_sX_s^{c,p})ds$,
\item $\forall t\in [0,T]$,  $\forall (p,c), (p',c') \in \Acr \times \Ccr $ such that $(p,c)=(p',c')$  on $[0,t]$, we have  $R_t^{(p,c)}=R_t^{(p',c')}$ $\P$.a.s
\end{itemize}
and
\be \label{diffRlog}
dR^{p,c}_t = -g(Z_t^{p,c}) dt + A^{p,c}_t dt + Z_t^{p,c} dW_t ,
\ee
where
\beas
Z_t^{p,c}=h(t)e^{- \int_0^t\delta_udu} (p_t-Z_t), 
\eeas
and
\beas
A^{p,c}_t = && h(t)e^{- \int_0^t\delta_udu}  \edg{ g(p_t-Z_t)+p_t\theta_t-\frac{1}{2}|p_t|^2
+\frac{\alpha Y_t}{h(t)}  + f(t,Y_t,Z_t)+ \frac{\alpha \log(c_t)}{h(t)} -c_t+e_t+
r }.
\eeas
It follows from the definition of admissible strategy that $R_T^{p,c}, \int_0^Tg(Z_t^{p,c})dt$ and $\int_0^T Z_t^{p,c}dW_t$  are both in $L^2(\Omega,\Fc,\p)$ and so $\E[(\int_0^TA^{p,c}_t dt)^2]< \infty.$ Moreover, by definition of the function $f$, we have $A_t^{p,c}\leq 0 dt\times d\p.a.e.$ So for all admissible strategy $(p,c)$, $R^{(p,c)}$ is $g-$supermartingale.
\\ Let $p^* \in \argmax\limits_{p\in P}\brak{\frac{1}{2} |p_t|^2 -p_t\theta_t- g(p_t-Z)}$ and $c^* \in \argmin\limits_{c\in C}\left(c-\frac{\alpha}{h} \log(c)\right)$ . Using Lemma \ref{lemma2}, we obtain
$\E[\int_0^T|p^*_t|^2dt]\leq \kappa_1\E[\int_0^T|Z_t|^2dt]+ \kappa_2 < \infty$ and 
$\mathbb E\left[\int_0^T|\log( c^*_t)|^2 dt+\int_0^T c^*_t dt\right] < \infty$ because $\dfrac{\alpha}{h}\ln(c^*)-c^*$ is bounded. Which guarantees that $(p^*,c^*)$ is admissible strategy.
\end{proof}

\begin{Example}
In the case of linear generator $g(t,z)=\eta_t.z$, the driver $f$ is given by
$$
f(t,y,z) = \dfrac{-\alpha y}{h(t)}+ \dfrac{1}{2}{\rm dist}_t^2(P,\theta+\eta)-\dfrac{1}{2}|\theta+\eta|^2_t+\eta_t.z + \essinf\limits_{c\in C}\left(c-\frac{\alpha}{h} \log(c)\right)_t -e_t - r ,
$$
So, by setting $\eta\equiv 0$, we find the results of Cheridito et all \cite{cheridito11}.
If $\dfrac{\alpha}{h} \in \Cc$ and $\theta+\eta \in P$, (this is the case, for example, where no constraints on investment consumption are required), then 
$${\rm dist}_t(P,\theta+\eta)\equiv 0,\quad
\essinf_{\tilde{c}\in C} \left(\tilde{c}-\frac{\alpha}{h}
\log(\tilde{c}) \right)= \frac{\alpha}{h}
\brak{\log \brak{1-\frac{\alpha}{h}}}.$$ So, the generator $f$ can be written 
 $$
f(t,y,z) = \dfrac{-\alpha y}{h(t)}+ -\dfrac{1}{2}|\theta+\eta|^2_t+\eta_t.z + \frac{\alpha}{h}\brak{\log \brak{1-\frac{\alpha}{h}}}_t -e_t - r ,
$$
and optimal investment consumption strategy is given by
$(p^*,c^*)= (\theta+\eta,\dfrac{\alpha}{h}).$ Moreover, the optimal value is given by
$$v(0)=h(0)(\ln(x)-\E[ \int_0^T \Gamma_s^0 \varphi_s d s ]).$$

where for $s \geq t\;\;\; \Gamma_s^t$ is given by

$$
\Gamma_s^t=\exp(\int_t^s(\alpha h^{-1}(s)-\frac{1}{2}\eta_s^2)ds +\int_t^s\eta_sdW_s)
$$
and
$$
\varphi_t= \dfrac{1}{2}|\theta+\eta|^2_t+  \frac{\alpha}{h}\brak{\log \brak{1-\frac{\alpha}{h}}} +e_t + r.
$$
The optimal investment strategy $p^*$ is equal to $Z+\frac{\theta-\eta}{\gamma}$ and 
\end{Example}
\begin{Example}{The $\kappa$- ignorance utility}\\
We consider the one-dimensional context, and we assume there is no constraint on trading strategy ($\Qc=\Pc$). For all $(t,\omega)$, we consider the functionnal
$$\ell: x\longmapsto \frac{1}{2} |x|^2 -x\theta_t(\omega)- \kappa|x-z|.$$
If $\theta_t(\omega) \neq  z$, $\ell$ admits a global minimum  given by $\theta_t(\omega)+\kappa$ if $z < \theta (\omega) $ and $\theta_t(\omega)-\kappa$ if $z > \theta (\omega)$. In the case where $\theta_t(\omega) =  z$, $\\ell$ admits two global minimums given by $\theta_t(\omega)\pm\kappa.$ So, an optimal strategy can be written
We obtain at least two optimal investment strategies given by
$$p^*=\theta+\kappa\textbf{1}_{Z < \theta} -\kappa\textbf{1}_{Z > \theta}+\zeta \textbf{1}_{Z = \theta} ,$$
where $\zeta=(\zeta_t)_{t\in [0,T]}$ is a predictable process taking value in $\{-\kappa, \kappa\}.$ Note that we do not necessarily have the uniqueness of the optimal strategy.
\end{Example}
\begin{Remark}\label{remcarra}
Thanks estimate\ref{YZestimationl}, we can assume that $\tilde{e}$ is exponentially integrable instead of bounded.
\end{Remark}
\section{Conclusion}
This study enhances the understanding of utility maximization in the context of incomplete financial markets by integrating nonlinear expectations and general constraints. 
In the case of exponential utility, we assumed that the income rate $e$ and the final payment $E$ are bounded and without any constraints on consumption $c.$ The proof of Theorem \ref{thmexp} relies on fondamental estimation (\ref{YZestimation})  of a unique solution $(Y,Z)$ to the BSDE equation (\ref{BSDEexp}). This estimate, which will make it possible to overcome the difficulties linked to the use of the bmo characters of $Z.$, will also make it possible to suppose that e and F are not bounded as is shown in the remark \ref{remexp}.
. In the carra utility situations, we have assumed that $E = 0$ and $\tilde{e} = e/X$ is bounded, but the results remain valid when $\tilde{e}$ is exponentially integrable as shown in the remark \ref{remcarra}. 
Through examples, we explicitly give the expressions of the optimal investment-consumption strategies in various cases of the utility functions studied. 
 
\end{document}